\titleformat*{\section}{\large\bfseries}
\titleformat*{\subsection}{\it}
\newtheorem{theorem}{Theorem}
\newtheorem{algo}{Algorithm}
\theoremstyle{definition}
\newtheorem{remark}{Remark}
\def\Thh{{\widehat{\Theta}}}
\newcommand{\R}{\mathbb{R}}
\newcommand{\E}{\mathbb{E}}
\newcommand{\mG}{\mathcal{G}}
\newcommand{\mD}{\mathcal{D}}
\newcommand{\mS}{\mathcal{S}}
\newcommand{\mN}{\mathcal{N}}
\newcommand{\mL}{\mathcal{L}}
\newcommand{\argmin}{\operatornamewithlimits{argmin}}
\def\fh{{\widehat{f}}}
\newcommand{\given}{\,|\,}
\title{{\bf Ensemble Prediction via Covariate-dependent Stacking}\footnote{This version: \today}}
\date{}
\begin{document}
\doublespacing
\maketitle

\vspace{-2cm}
\begin{center}
{\large Tomoya Wakayama$^1$ and Shonosuke Sugasawa$^2$}

\medskip
$^1$Graduate School of Economics, The University of Tokyo\\
$^2$Faculty of Economics, Keio University
\end{center}

\vspace{0.5cm}
\begin{center}
{\large\bf Abstract}
\end{center}

This study proposes a novel approach to ensemble prediction, called ``covariate-dependent stacking'' (CDST). Unlike traditional stacking and model averaging methods, CDST allows model weights to vary flexibly as a function of covariates, thereby enhancing predictive performance in complex scenarios. We formulate the covariate-dependent weights through combinations of basis functions and estimate them via cross‐validation optimization. To analyze the theoretical properties, we establish an oracle inequality regarding the expected loss to be minimized for estimating model weights. Through comprehensive simulation studies and an application to large-scale land price prediction, we demonstrate that the CDST consistently outperforms conventional model averaging methods, particularly on datasets where base models fail to capture the underlying complexity. Our findings suggest that the CDST is especially valuable for, but not limited to, spatio-temporal prediction problems, offering a powerful tool for researchers and practitioners across a wide spectrum of data analysis fields.

\bigskip\noindent
{\bf Key words}: Cross-validation; Ensemble learning; Oracle inequality; Spatio-temporal prediction; Stacking

\vspace{0.5cm}
\section{Introduction}

Reliable data-driven prediction is of critical importance in decision-making in both the industry and government~\citep{kleinberg2015prediction} and is a major concern in the fields of statistics, machine learning, and econometrics. For data analysts, selecting the best predictive model from among a number of base models (also referred to as candidate models) remains challenging. One conventional approach to addressing this challenge is model selection, which has been extensively studied for more than 50 years, including the Akaike information criterion \citep[AIC;][]{akaike1973information}, Mallows' Cp \citep{mallows1973some}, leave-one-out cross-validation \citep{stone1974cross}, and others \citep{Schwarz1978bic, konishi1996generalised, konishi2008information}. These methods seek to choose the best model from among multiple base (candidate) models. However, the selected model can be different if the dataset is slightly altered, leading to prediction instability. Further, relying on a single model entails the risk of suboptimal performance. In situations where multiple plausible models can explain a phenomenon, model selection requires selecting only one model, which may discard valuable information available from other base models. As an alternative approach, model averaging is a popular way to spread risk across multiple models \citep{Claeskens_Hjort_2008, Benito2015, steel2020model}, thereby stabilizing and improving prediction accuracy. Specifically, model averaging assigns continuous (non-binary) weights to each model, ensuring stable prediction, even when the dataset suffers from small perturbations. Additionally, it has been shown to reduce predictive risk in regression estimation \citep{leung2006information}. In particular, model averaging can improve estimation accuracy when individual models are unstable and noisy \citep{yuan2005combining,wang2012model}. This is primarily because model averaging allows integrating various opinions without the need to assign zero weights to appropriate base models.

One stream in model averaging research is frequentist model averaging (FMA). Early works \citep{bates1969combination,granger1984improved} demonstrated the effectiveness of model averaging and established its theoretical underpinnings. Subsequently, information criterion-based model integration was explored by \cite{buckland1997model}, \cite{hansen2007least} and \cite{Claeskens_Hjort_2008}, leading to a plethora of subsequent studies on FMA \citep{wan2010least, ando2014model, liu2013heteroskedasticity, Liu2016GLS}. Other approaches include adaptive averaging by mixing~\citep{yang2000adaptive,yang2001adaptive,yuan2005combining}, plug-in averaging~\citep{LIU2015142}, and jackknife averaging~\citep{HANSEN201238}, among others. Moreover, FMA is widely explored in the field of machine learning as ensemble learning. One representative method is stacking, that is, learning a meta-model by using the predictions of multiple base models as input. The idea originated from \citet{wolpert1992stacked} within the context of neural networks and was subsequently adopted in a variety of contexts \citep{breiman1996stacked, van2007super, clydec2013bayesian, wakayama2024process}. The meta-model improves the prediction accuracy of the ensemble by optimizing the weight of each model's prediction. Ensemble learning, as well as bagging~\citep{breiman1996bagging} and boosting \citep{schapire1990strength}, has been theoretically justified \citep{mohri2018foundations}, empirically proven to improve the performance of machine learning algorithms, and widely used in data analysis competitions.

Bayesian model averaging (BMA) is another research stream of model averaging, which accounts for model uncertainty within Bayesian frameworks. Under BMA, model importance is assessed based on posterior probabilities, and weighted averages are used to make predictions. Starting with \cite{madigan1994model} and \cite{raftery1997bayesian}, numerous studies investigated the advantages and computational algorithms of BMA in detail \citep{hoeting1999bayesian,clyde2004model}. BMA has been used in a variety of statistical models, including linear regression \citep{fernandez2001benchmark}, time series \citep{STOCK2006515} and graphical models \citep{scutari2014bayesian}, among others. It has also been applied to real-world problems in fields as diverse as economics \citep{sala2004determinants}, finance \citep{avramov2002stock}, medicine \citep{yeung2005bayesian}, and ecology \citep{annest2009iterative}. Recently, Bayesian predictive synthesis (BPS), an extension of BMA, has been proposed with high prediction performance. \cite{mcalinn2019dynamic} developed a method for dynamically changing model weights over time, and \cite{cabel2022bayesian} and \cite{sugasawa2023bayesian} adopted it in spatial contexts and causal inference, respectively. Although BPS addresses time- or space-varying averaging weights, it requires Bayesian base models and is computationally expensive.

Inspired by BPS, we propose a method that extends stacking by varying weights based on covariates, including time and space. We estimate weights using cross-validation and analyze the theoretical properties of the proposed method, ensuring that it has desirable properties for large samples. Furthermore, through simulation experiments and real data analysis, we demonstrate that our method achieves superior prediction accuracy compared to conventional model averaging methods. The effectiveness of this method is particularly evident in datasets that cannot be successfully captured by a single model alone.

We discuss below several related works on covariate-varying regression structure. The idea that the regression structure changes locally on the covariate space has been studied. For instance, local regression~\citep{cleveland1988locally, fan1996local} can capture different regression structures in different regions of the covariates by employing a linear or polynomial regression model in each local region of the data space. Another important example is threshold regression~\citep{hansen2000,caner2004instrumental, Kourtellos_Stengos_Tan_2016,hansen2017}, which allows the regression structure to vary around a threshold value for a given covariate. This enables modeling singular structural changes by fitting different regression models on either side of the threshold. These methods consider hard domain splitting and employ simple regressions within each domain, whereas our method accounts for complex and smooth structural changes. In the context of variable-dependent ensembles, our approach differs from earlier studies \citep{sill2009feature, Capezza2021additive}: the former restricts the weights to be a linear combination of meta-features, while the latter constrains the weights so that they are nonnegative and sum to one. By contrast, our method imposes no constraints on the weights (thereby offering greater expressive power) and is computationally efficient, making it particularly suitable for applications involving large-scale data, such as the real estate data analyzed in Section~\ref{sec:app}.

The remainder of this paper is organized as follows. We introduce the proposed method and computational algorithms in Section~\ref{sec:method} and establish its theoretical properties in Section~\ref{sec:theory}. We then demonstrate the performance and interpretability of the proposed method through numerical experiments and case studies in Section~\ref{sec:num}. Section~\ref{sec:conclusion} concludes the article with a discussion and pointers for future research. All the technical details are deferred to the supplementary material.

\section{Methodology}\label{sec:method}

\subsection{Review: Ensemble Prediction via Stacking}\label{subsec: review}
Stacking is an ensemble learning algorithm that combines multiple models to improve predictive performance~\citep{wolpert1992stacked, breiman1996stacked, Leblanc1996combining}. Consider a collection of $J$ base models, $\{ f_1, f_2,\ldots, f_J \}$, where each $f_j:\R^d\to\R$ can be any predictor, such as linear regression, random forest~\citep{breiman2001random}, or Gaussian process regression~\citep{rasmussen2006gaussian}. Given a dataset $\mD := \{(x_i, y_i)\in \R^d\times\R \}_{i=1}^n$, the first step in stacking is to train each model $f_j$ separately and obtain a predictive function $\hat{f}_j(\cdot)$.
The next step is to determine the optimal weights for combining the predictions of the trained models. For point prediction, the stacked prediction is given by
\begin{equation}\label{model: ST}
\hat{g}(x) = \sum_{j=1}^J w_j \hat{f}_j(x),
\end{equation}
where $w_j \in \R$ is a stacking weight. To find optimal weights $\{\hat{w}_j\}_{j=1}^J$, leave-one-out cross-validation is customarily employed:
\begin{equation*}
\hat{w} = \argmin_{w_1, \ldots, w_J} \sum_{i=1}^n \left( y_i - \sum_{j=1}^J w_j \hat{f}_{j, -i}(x_i) \right)^2,
\end{equation*}
where $\hat{f}_{j, -i}(\cdot): \R^d\to \R$ denotes the predictive function of the $j$-th model when the $i$-th observation is left out of the training dataset. This optimization problem can be easily solved using the least squares estimator to obtain optimal stacking weights $\{\hat{w}_j\}_{j=1}^J$. Finally, the stacked model is reconstructed using the trained base models and the optimal stacking weights. This final model can be used to make predictions on new data points.

By combining multiple models through stacking, the ensemble can often achieve better predictive performance than any of the individual base models~\citep{van2007super, Le2017bayes}. The stacking allows the ensemble to learn how to optimally combine the strengths of each base model, potentially compensating for any individual weaknesses.

\subsection{Covariate-dependent Stacking}
As in the previous section, given a set of $J$ trained base predictions, $\fh_j(x)$ for $j=1,\ldots,J$ on $x\in \R^d$, we propose an ensemble prediction of the following form:
\begin{equation}\label{model: CDST}
\hat{g}(x)=\sum_{j=1}^J w_j(\tilde{x})\fh_j(x),
\end{equation}
where $\tilde{x}$ represents an additional covariate. $\tilde{x}$ can be either identical to or distinct from $x$, that is, $\tilde{x} = x$ or $\tilde{x} \neq x$. Note that weight $w_j(\tilde{x})$ is covariate-dependent and reduces to the standard stacking form discussed in Section~\ref{subsec: review} when $w_j(\tilde{x})$ is constant (i.e. independent of $\tilde{x}$). Henceforth, this approach will be referred to as covariate-dependent stacking (CDST). 

In \eqref{model: CDST}, we allow the covariates used to determine the ensemble weights, $\tilde{x}$, to be different from the covariates $x$, which are used to learn the base prediction models. This flexibility can be useful in practice. For instance, one may train the base models using all available covariates, while computing the weights based on more interpretable coordinates of $x$ if these are believed to drive structural changes (\textit{internal covariate setting}). Furthermore, in spatial applications~\citep{banerjee2014hierarchical, cresswikle2015statistics} such as real estate price prediction, the regional base models might be built using some covariates (e.g., lot size, crime rate, and distance from the nearest station), while macro location information such as latitude and longitude, not used in the base models, can be employed only to compute the weights (\textit{external covariate setting}). In both cases, the key idea is that selecting different covariates for weight calculation can help incorporate domain knowledge and balance interpretability with predictive performance. For brevity, we proceed with the discussion by letting $\tilde{x}=x$.

For CDST in \eqref{model: CDST}, the infinite-dimensional (functional) weights need to be determined. To make the optimization computationally feasible, we model $w_j(x)=\mu_j + E(x)^\top \gamma_j$ for $j=1,\ldots,J$, where $\mu_j \in \R$, $\gamma_j \in \R^M$ and $E:\R^d \to \R^M$ is a set of $M$ basis functions such as B-splines~\citep{deBoor1978practical} or radial basis functions~\citep{Buhmann_2003}. Then, we estimate $\Theta=(\gamma_1^{\top},\ldots,\gamma_J^{\top}, \mu_1,\ldots,\mu_J)^\top \in \R^{JM+J}$ through the penalized cross-validation:
\begin{equation}\label{optimal-weight}
\Thh={\rm argmin}_{\Theta} \sum_{i=1}^n \left[y_i-\sum_{j=1}^J \big\{\mu_j + E(x_i)^\top \gamma_j\big\}\fh_{j, -i}(x_i)\right]^2 + Q_{\lambda}(\Theta),
\end{equation}
where $Q_{\lambda}(\Theta)$ is the penalty term. This is not an unbiased estimator of risk due to the existence of the penalty term, but it is helpful in controlling the variance of the risk estimator. The detailed discussion is deferred to the supplementary material. Concerning the specific form of the penalty, if it is a fused-lasso~\citep{Tibshirani2004fused} or trend filtering~\citep{Kim2009trend} type, abrupt structural change can be captured, although we adopt a ridge-type penalty $Q_{\lambda}(\Theta)=\sum_{j=1}^J \lambda_j\gamma_j^\top \gamma_j$ with tuning parameters $\{\lambda_j\}_{j=1}^J$ for computational efficiency. Our aim is to obtain $\Thh$ and construct the optimal predictor.

Below, we note the ways to set basis functions. One approach to determining basis functions is to place them at equally spaced points in the covariate space. Specifically, let $\{c_m\}_{m=1}^M$ be a set of equally spaced points in $\R^d$. Then, the basis functions can be defined as $\phi_m(x) = \phi(x - c_m)$, where $\phi$ is a chosen basis function, such as a B-spline or a radial basis function. This approach is simple and straightforward but may not be optimal when the data points are not uniformly distributed in the covariate space.
An alternative approach, which we adopted in the later analysis, is to place the basis functions at the centers of the clusters obtained by applying the $k$-means algorithm~\citep{macqueen1967some} to the observed data points. Let $\{{x}_i\}_{i=1}^n\subset\R^d$ be the observed data points in the covariate space and $\{\tilde{c}_m\}_{m=1}^M$ be the cluster centers obtained by applying the $k$-means to $\{{x}_i\}_i$. Then, the basis functions are defined as $\phi_m(x) = \phi(x - \tilde{c}_m)$. This approach adapts the placement of the basis functions to the distribution of the observed data points and can lead to a better approximation performance, especially when the data points are not uniformly scattered in the covariate space. For the number of bases $M$, we suggest choosing it either by optimizing a prediction criterion such as cross-validation or by performing a clustering criterion, such as the elbow and silhouette methods~\citep{rousseeuw1987silhouettes}.

\subsection{Computation} \label{subsec: comp}
We propose two methods for ensemble weight estimation. First, we introduce an estimation approach based on the expectation-maximization algorithm (EM algorithm), which enables closed-form updates and achieves high computational efficiency. The EM algorithm is derived in the appendix for Gaussian regression models; however, it can be extended to Laplace regression models (corresponding to $\ell_1$ loss regression) and logistic regression models (applicable to classification) because the Laplace distribution can be expressed as a scale mixture of Gaussian distributions, and in the logistic regression, the Pólya-Gamma augmentation~\citep{Polson2013} is available.
Second, the weights in \eqref{optimal-weight} can also be optimized using a framework analogous to parameter estimation in a Gaussian generalized additive model (GAM) by employing the stable nested optimization algorithm proposed by \cite{wood2011fast}. This approach can also be extended to generalized linear regression. Note that non-Gaussian regression models may show a slight decrease in accuracy, as stated in \cite{wood2011fast}.

\section{Theoretical Validation}\label{sec:theory}

We examine the theoretical artifact of the proposed approach and discuss its theoretical properties. Specifically, by deriving an oracle inequality, we show that the predictor obtained through penalized cross-validation exhibits a small generalization gap.

\subsection{Motivation}
We have proposed a method based on the idea that flexible model integration can improve prediction accuracy, building upon the work of \cite{cabel2022bayesian}. As we expand the model expressiveness, it becomes crucial to address the bias-variance trade-off (Figure~\ref{fig: concept}). Increasing model expressiveness reduces bias---the discrepancy between the true function and the optimal predictor within the model space, also known as approximation error. However, this advantage may be offset by an increase in variance---the discrepancy between the predictor selected through cross-validation and the optimal predictor, also referred to as the generalization gap. Hereafter, we discuss this issue through an oracle inequality. The more general results and discussion are deferred to the appendix.

\begin{figure}[t]
  \begin{center}
  \includegraphics[width=0.35\linewidth ]{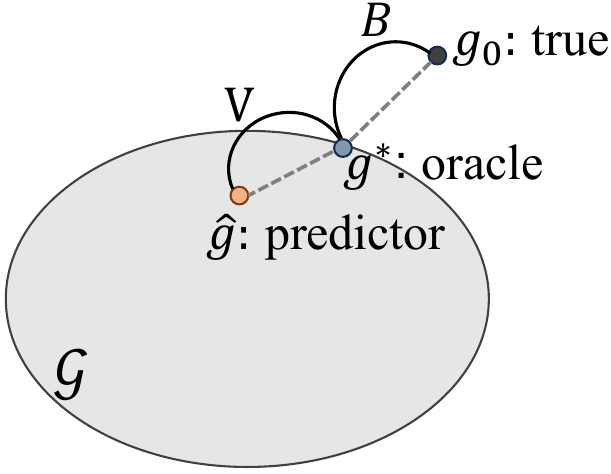}
  \end{center}
  \caption{The true data-generating function is denoted as $g_0$. The oracle predictor, $g^*$, which best approximates $g_0$ within the model combination space, $\mG$, is associated with gap $B$, known as bias or approximation error. Difference $V$ between the cross-validation-estimated $\hat{g}$ and $g^*$, known as the variance or generalization gap, is evaluated using an oracle inequality.
 \label{fig: concept}}
\end{figure}

\subsection{Setting}

Consider $S (\subset \R^d \times \R)$-valued random elements $(X_1, Y_1),\ldots,(X_n, Y_n)$ that are independent and identically distributed according to the following model:
\begin{equation*}
    Y = g_0(X) + \varepsilon~;\quad \E_P[\varepsilon\given X] =0,
\end{equation*}
where $g_0$ is a true function, $\varepsilon$ is a random error term, and $P$ is the joint distribution of $(X,Y)$. The prediction of $y$ given the new $x$ is of interest.

To closely mimic a cross-validation procedure and facilitate theoretical analysis, following the previous theories on cross-validation \citep{davies2016optimal,vaart2006oracle,van2003unified}, we adopt a randomized data-splitting setting that partitions the full dataset into a training set $\mD_0$ and a test (validation) set $\mD_1$. Formally, we assign to each observation an independent indicator random variable $I_i\in \{0,1 \}$ such that if $I_i = 0$, pair $(X_i, Y_i)$ is allocated to $\mathcal{D}_0$ and if $I_i = 1$ it is allocated to $\mathcal{D}_1$. Given the observed data $\{(X_i, Y_i)\}_{i=1}^n$, we denote by $P_I$ and $\E_I$ the probability measure and expectation operator over the random split assignments $I = (I_1, \ldots, I_n)$. The splitting is assumed independent of the data itself. In the following, $n_0$ and $n_1$ denote the cardinalities of $\mathcal{D}_0$ and $\mathcal{D}_1$, respectively, and assume $n_1 = c n$ with $0<c<1$. Let $\fh_{j}(\cdot) \ (j=1,\ldots,J)$\ be a given base predictor. We then define a stacked predictor as $g_{\Theta}(x)=\sum_{j=1}^J w_{j,\Theta}(x)\fh_{j}(x)$, where $w_{j,\Theta}(x)$ is a weight function parametrized by $\Theta$, and denote a countable class of stacked predictors $\mG = \{ g_{\Theta} \}_\Theta$. Remark that $\Theta (= \Theta(\mD_0))$ depends on $\mD_0$.

Using test data $\mathcal{D}_1$, we define the empirical risk for any stacked predictor $g_\Theta \in \mG$ as
\begin{equation*}
R_{\text{emp}}(g_\Theta; \mD_0,\mD_1) := \frac{1}{n_1}\sum_{(X_i,Y_i)\in \mD_1} L\bigl(g_{\Theta(\mD_0)}(X_i),Y_i\bigr) ,
\end{equation*}
where $L\bigl(g(X),Y\bigr)$ is a loss function given by:
\begin{equation*}
L\bigl(g(X), Y\bigr) := \bigl(Y - g(X)\bigr)^2 - \bigl(Y - g_0(X)\bigr)^2.
\end{equation*}
Note that the centering of the loss by subtracting $\bigl(Y - g_0(X)\bigr)^2$ simplifies the subsequent theoretical analysis while being equivalent to the gap between $g$ and $g_0$.
To balance the trade-off between fit and complexity, we add a measurable penalty function $Q_{\lambda}(\Theta)$ with scale parameter $\lambda$, multiplied by $1/n$. Thus, the split-averaged penalized empirical risk is defined as:
\begin{align}\label{eq:risk}
\mathcal{R}(g_\Theta; \mD_0,\mD_1) &:= \E_I\Biggl[R_{\text{emp}}(g_\Theta; \mD_0,\mD_1) + \frac{1}{n} Q_{\lambda}(\Theta)\Biggr] \nonumber \\
&= \E_I\Biggl[\frac{1}{n_1}\sum_{(X_i,Y_i)\in \mD_1} L\bigl(g_\Theta(X_i),Y_i\bigr)\Biggr] + \frac{1}{n} Q_{\lambda}(\Theta).
\end{align}
Then, we define a ``population-level" stacking predictor $\hat{g}$ as:
$$
\hat{g}(x) = g_{\hat{\Theta}}(x;\mD_0,\mD_1) := \argmin_{g_{\Theta}\in\mG} \mathcal{R}(g_\Theta; \mD_0,\mD_1)
$$
This formulation is related to equation \eqref{optimal-weight}, in that $1/n$ times the objective function in \eqref{optimal-weight} (the penalized leave-one-out cross-validation risk) is an approximation of the split-averaged penalized risk \eqref{eq:risk}. In other words, loss function $L(\cdot,\cdot)$ is used to measure the deviation of a candidate predictor from true function $g_0$, while the random splitting (with corresponding $P_I$ and $E_I$) mimics the variability inherent in cross-validation. 

Additionally, we denote the oracle predictor in $\mG$ by
\begin{align*}
    g^* &:= \argmin_{g_{\Theta}\in\mG} \E_{I} \left [ \E_{(X,Y)\sim P} \left[L(g_{\Theta(\mD_0)}(X),Y)\right]  + Q_{\lambda}(\Theta) \right] \\
    & = \argmin_{g_{\Theta}\in\mG}  \E_I \left [ \E_{\mD_1\sim P^{\otimes n_1}} \left[\frac{1}{n_1}\sum_{(X_i,Y_i)\in \mD_1} L\bigl(g_{\Theta(\mD_0)}(X_i),Y_i\bigr) \right] + \frac1n Q_{\lambda}(\Theta) \right],
\end{align*}
where $P^{\otimes n_1}$ represents the measure obtained by taking the Cartesian product of $n_1$ copies of the probability distribution $P$. This predictor is derived from the minimization of expected risk and is desirable in $\mG$. $\Theta^*$ denotes the corresponding parameter to $g^*$.

\begin{remark}[Centered loss function]
    Note that the aforementioned loss function is centered, the corresponding risk is $R(g) = \E_P[(g(X)-g_0(X))^2]$, and minimizing this risk with respect to $g$ is equivalent to minimizing the mean squared error, $\E_P[(Y-g(X))^2] = \E_P[(g(X)-g_0(X))^2] + \E_P[\varepsilon^2]$. However, for the ease of the theoretical analysis (e.g., calculation of the Bernstein pair; see the appendix for a definition), we set the loss function as a centered version.
\end{remark}

\begin{remark}[Measurability by countability]
    In the previous sections, each predictor $g_{\Theta}$ was a combination of $\{\hat{f}_j\}_{j=1}^J$ with weights $\{w
    _{j,\Theta}:\Theta\in\R^{JM+J}\}_{j=1}^J$. $\mG$ can be an uncountable set in practice, but this section assumes that it is countable for the convenience of the analysis (e.g., taking the supremum over $\mG$ preserves measurability).
\end{remark}

\subsection{Oracle Inequality}
We evaluate how stacking predictor $\hat{g}$ can estimate oracle predictor $g^*$ in the following result.

\begin{theorem} \label{thm:qdrt}
Let $S$ be a bounded and convex subset of $\R^{d+1}$ with nonempty interior. Suppose that $g_0$ and $g\in\mG$ are Lipschitz continuous, their ranges are in interval $[-\sqrt{B}/2,\sqrt{B}/2]\subset\R$ with some constant $B>0$ and $\varepsilon$ satisfies $\E_{P}[e^{t|\varepsilon|}|X]<\infty$ for some $t>0$. Then, the following inequality holds:
\begin{align*}
    &\E_I \left [ \E_{\mD_1\sim P^{\otimes n_1}} \left[\frac{1}{n_1}\sum_{(X_i,Y_i)\in \mD_1} L\bigl(\hat{g}(X_i),Y_i\bigr) \right] \right] \\ 
    & \lesssim\ \E_I \left [ \E_{\mD_1\sim P^{\otimes n_1}} \left[\frac{1}{n_1}\sum_{(X_i,Y_i)\in \mD_1} L\bigl(g^*(X_i),Y_i\bigr) \right] +\frac{Q_{\lambda}(\Theta^*)}{n}\right] 
    + \frac{\log\left(1+ n_1^{\frac{d+1}{2}} \right)}{n_1} C(t, B, g_0),
\end{align*}
where 
\begin{align*}
    C(t, B, g_0) = &\max\left\{\frac1t,1\right\} \max \{B,1\} \\
    &+ \sup_{g\in\mG} \frac{ n_1^{1/2}\E_P[(g-g_0)^2] \left(e B+ 8t^{-2}\left\|\E_{P}\left[e^{t|\varepsilon|}\given X\right] \right\|_{\infty} \right)}{ \E_P [L(g(X),Y)] },
\end{align*}
and $A_n \lesssim B_n$ means $A_n \leq cB_n$ for some constant $c > 0$ and sufficiently large $n$.
\end{theorem}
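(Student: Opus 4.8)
\emph{Proof strategy.} The plan is to recognise the objective as an instance of the Bernstein-pair machinery, for which the essential tasks are to control the second moment of the loss by its mean and to bound the complexity of $\mG$. I would first exploit the centred quadratic structure. Writing $Y=g_0(X)+\varepsilon$ yields the pointwise identity
\[
L(g(X),Y)=\big(g_0(X)-g(X)\big)^2+2\varepsilon\big(g_0(X)-g(X)\big),
\]
so that $\E_P[L(g(X),Y)]=\E_P[(g-g_0)^2]$ and, because the cross term vanishes under $\E_P[\varepsilon\given X]=0$,
\[
\E_P\big[L(g(X),Y)^2\big]=\E_P[(g-g_0)^4]+4\,\E_P\big[\varepsilon^2(g-g_0)^2\big].
\]
On the first term I would use the range bound $|g-g_0|\le\sqrt{B}$, and on the second the elementary inequality $\varepsilon^2\le 2t^{-2}e^{t|\varepsilon|}$ together with $\E_P[e^{t|\varepsilon|}\given X]<\infty$, to arrive at
\[
\E_P\big[L(g(X),Y)^2\big]\le\Big(B+8t^{-2}\big\|\E_P[e^{t|\varepsilon|}\given X]\big\|_\infty\Big)\E_P\big[L(g(X),Y)\big].
\]
This is the Bernstein pair, and it is exactly the variance-proportional factor of $C(t,B,g_0)$ (up to the numerical constant multiplying $B$); the companion scale factor $\max\{1/t,1\}\max\{B,1\}$ arises from bounding the sub-exponential range of these increments.

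With the moment bound in hand, I would invoke the defining optimality of $\hat g$ to obtain the basic inequality
\[
\E_I\Big[\E_{\mathbb{P}_1}[L(\hat g)]+\tfrac1n Q(\Thh)\Big]\le\E_I\Big[\E_{\mathbb{P}_1}[L(g^*)]+\tfrac1n Q(\Theta^*)\Big],
\]
and then pass from the empirical to the population risk by writing $\E_P[L(\hat g)]=\E_{\mathbb{P}_1}[L(\hat g)]+(\E_P-\E_{\mathbb{P}_1})[L(\hat g)]$, treating $g^*$ analogously. Thus the whole problem reduces to a uniform bound on the centred empirical process $(\E_{\mathbb{P}_1}-\E_P)\big[L(g)-L(g^*)\big]$ over $g\in\mG$, weighted by $\E_P[L(g)]$.

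This uniform control is the crux. Conditionally on the split, I would apply a Bernstein inequality for sub-exponential summands---the exponential-moment hypothesis on $\varepsilon$ is indispensable here because the increment $\varepsilon(g-g_0)$ is unbounded---with variance proxy supplied by the pair above. To move from a fixed $g$ to the whole class I would use that $g$ and $g_0$ are Lipschitz and bounded on the bounded convex set $S\subset\R^{d+1}$: discretising $S$ at resolution $n_1^{-1/2}$ gives a net of cardinality of order $n_1^{(d+1)/2}$, and Lipschitz continuity controls the fluctuation within each cell, so a union bound over the net furnishes the factor $\log\!\big(1+n_1^{(d+1)/2}\big)$. Collecting the deviation and scale terms then produces the remainder $\tfrac{\log(1+n_1^{(d+1)/2})}{n_1}C(t,B,g_0)$.

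Finally I would combine this deviation bound with the basic inequality, cancel the matching empirical-risk terms, take $\E_I$ throughout, and read off the stated inequality, with the oracle term $\E_I[\E_P[L(g^*)]+n^{-1}Q(\Theta^*)]$ surviving from the right-hand side. I expect the uniform concentration to be the main obstacle: the unbounded error forces a sub-exponential (rather than bounded-difference) Bernstein argument, and pairing it with the Lipschitz discretisation so that the variance stays proportional to $\E_P[L(g)]$ while the entropy cost remains logarithmic is the delicate part; the averaging over the split indicator $\E_I$ and the handling of the penalty $Q$ are comparatively routine bookkeeping.
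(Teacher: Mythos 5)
Your Bernstein-pair computation is essentially the one the paper uses: the paper proves this theorem by plugging $v(\ell)= 2\E_P[(g-g_0)^2]\big(eB+ 8t^{-2}\|\E_{P}[e^{t|\varepsilon|}\given X]\|_{\infty}\big)$ and $M(\ell)=4\max\{t^{-1},1\}\max\{B,1\}$ into its general penalized oracle inequality (Theorem~\ref{thm:penalty}) with $p=q=1$ and $\alpha=\delta=n_1^{-1/2}$. Your second-moment bound $\E_P[L^2]\le \big(B+8t^{-2}\|\E_P[e^{t|\varepsilon|}\given X]\|_\infty\big)\E_P[L]$ is the core of verifying that pair (strictly, a Bernstein pair controls the whole exponential series, not just the second moment, which is where the paper's extra factors come from, but that is routine under the exponential-moment assumption). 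Likewise, your insistence that the deviation bound be weighted by $\E_P[L(g)]$ plays the role of the paper's multiplicative decomposition with measures $(1+\alpha)\mathbb{P}_1$ and $(1+2\alpha)P$; that choice $\alpha=n_1^{-1/2}$ is also where the otherwise unexplained factor $n_1^{1/2}$ inside $C(t,B,g_0)$ comes from (it is $1/\alpha$).

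The genuine gap is the entropy step. The supremum you must control is over the function class $\mG$ (equivalently the loss class $\mL$), and an $n_1^{-1/2}$-net of the \emph{domain} $S$ is not a net of $\mL$: Lipschitz control within a cell of $S$ bounds the oscillation of a \emph{fixed} $\ell\in\mL$ across that cell, and a union bound over $O(n_1^{(d+1)/2})$ domain points says nothing about $\sup_{g\in\mG}$. If you convert your domain net into a sup-norm net of $\mL$ in the canonical Kolmogorov--Tikhomirov way (recording discretized function values on each cell), its cardinality is \emph{exponential} in the number of cells, so the union bound costs $\log(\text{net size})\asymp n_1^{(d+1)/2}\log n_1$ rather than $\log\big(1+n_1^{(d+1)/2}\big)$, and the resulting remainder, of order $n_1^{(d+1)/2-1}\log n_1$, diverges for every $d\ge 1$, making the inequality vacuous. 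What the paper actually uses is that the covering number of $\mL$ \emph{itself} is polynomial in $1/\delta$, namely $\mN(\delta,\mL,\|\cdot\|_{\infty})\lesssim \delta^{-(d+1)}$, so that with $\delta=n_1^{-1/2}$ the entropy is genuinely logarithmic. A polynomial (rather than exponential) covering bound of this kind cannot be produced by Lipschitzness of the individual $g$'s via domain discretization; it reflects the finite-dimensional structure of the class, e.g.\ $\mG=\{g_\Theta:\Theta\in\R^{JM+J}\}$ with a bounded parameter set and a parametrization that is Lipschitz in $\Theta$, which gives $\mN(\delta)\lesssim\delta^{-D}$ with $D$ the parameter dimension. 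To repair your argument, replace the net of $S$ by a net of the parameter set (hence of $\mL$) and justify the polynomial covering bound; the rest of your outline --- the basic inequality, sub-exponential Bernstein concentration with variance proxy proportional to $\E_P[L(g)]$, and the routine handling of $\E_I$ and the penalty --- then matches what the paper's Theorems~\ref{thm:general} and~\ref{thm:penalty} accomplish via van der Vaart's maximal inequality for Bernstein pairs.
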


The above result measures the generalization gap of the proposed method through inequalities. The first term on the right-hand side represents the minimum penalized error. As the penalty is with a decay rate of $1/n$, the first term asymptotically converges to the minimum prediction error. The second term quantifies the discrepancy between the minimum prediction error and the prediction error of the proposed method. Recall that the essence of our stacking lies in the functional representation of the weights, and its flexibility is expressed in terms of the number $M$ of basis functions. That is, the case $M=1$ ($w$ is a constant function of $x$) corresponds to the original stacking while increasing the number of $M$ allows for adaptive combinations. Increasing the number of $M$ corresponds to growing the dimension $JM+J$ of the weights' parameter $\Theta$. Hence, the size of the predictor family $\mG$ grows exponentially, and the value of the supremum in the definition of $C(t, B, g_0)$ tends to become larger, albeit the effect is asymptotically negligible as the size of the test data increases.

\begin{remark}[Note on assumptions]
In the above theorem, the function space is assumed to be bounded, but this assumption can be replaced by the assumption that the function space (parameter set) is countably finite. The assumption on the tail property of error distribution is satisfied when the error follows a sub-Gaussian or sub-exponential distribution.
\end{remark}

\section{Numerical Studies}\label{sec:num}
We explore the implementation and predictive effectiveness of CDST through numerical experiments. We illustrate the behavior of the proposed method in Section~\ref{subsec: illustration} and compare the performance of our proposed method with other conventional methods and model averaging methods through simulations and a case study in Section~\ref{subsec: comparison}~and~\ref{sec:app}.

\subsection{Empirical behavior of CDST}\label{subsec: illustration}
To identify how the proposed method assigns weights to base models, we investigate two scenarios: one in which weights depend on covariates and the other in which they depend on spatial location in the settings of the spatial prediction. In both scenarios, we implemented the EM algorithm, the details of which are provided in the appendix. The initial values were randomly varied (from the standard normal distribution), and the results showed that the estimates for the weights and log-likelihood remained stable, with deviations within $10\%$ or $10^{-2}$, thus indicating that the estimates are not significantly influenced by the initial values. For experiments concerning the selection of the number of bases, please refer to the appendixs.

\subsection*{Case 1: Weight Dependent on Internal Covariates}
First, we consider a scenario in which the data-generating process varies depending on internal covariates; that is, observed covariates affect both the underlying regression structures and model weights. We generate $600$ covariates ${x}\in \R^5$, whose first two elements $(x_1, x_2)$ are uniformly sampled from square $[-1,1]^2\subset\R^2$ and the other three $(x_3, x_4, x_5)$ are sampled from the standard normal distribution. We consider two domains, one where $x_1$ is greater than $0$ and the remaining domain, and we set $\mu = 2 x_1 + 2 x_2$ in the former domain and $\mu = -x_1 + 4 x_2^2$ in the latter. Then, we generate $y$ by adding noise from $N(0,0.7^2)$ to $\mu$.

For our experiments, we randomly include $n=300$ points out of $600$ points for the training data. In the training data, we perform the ordinary least squares on the linear regression model for instances where the first element, $x_1$, of the covariate is less than $0$ and set it to $f_1$ in model~\eqref{model: CDST}; we similarly define $f_2$ on the other instances. We prepare $10$ radial basis function kernels $\exp (-\|(x_1, x_2)^{\top} - (c_1, c_2)^{\top}\|_2^2/2)$, where $(c_1, c_2)$ is the center of the kernel and 10 central points are selected by the $k$-means clustering~\citep{macqueen1967some, R-stats}. Then, leave-one-out cross-validation determines the weights of CDST~\eqref{model: CDST} by executing the EM algorithm until the sum of the absolute differences between the current and previous steps' parameter estimates falls below the threshold of $10^{-5}$.

Figure~\ref{fig: illustrate_cov} illustrates the estimated weights $w_1$ and $w_2$ of the model in \eqref{model: CDST} on $(x_1, x_2)\in [-1,1]^2$. In the left-hand panel, the weight $w_1$ places larger values in the region where $x_1<0$, implying that $f_1$ explains data in the region, while the right-hand panel suggests that $f_2$ accounts for the remaining area. This result is consistent with the definitions of $f_1$ and $f_2$, indicating that the weight determination performs well.

\begin{figure}[t]
  \begin{center}
  \includegraphics[width=0.9\linewidth ]{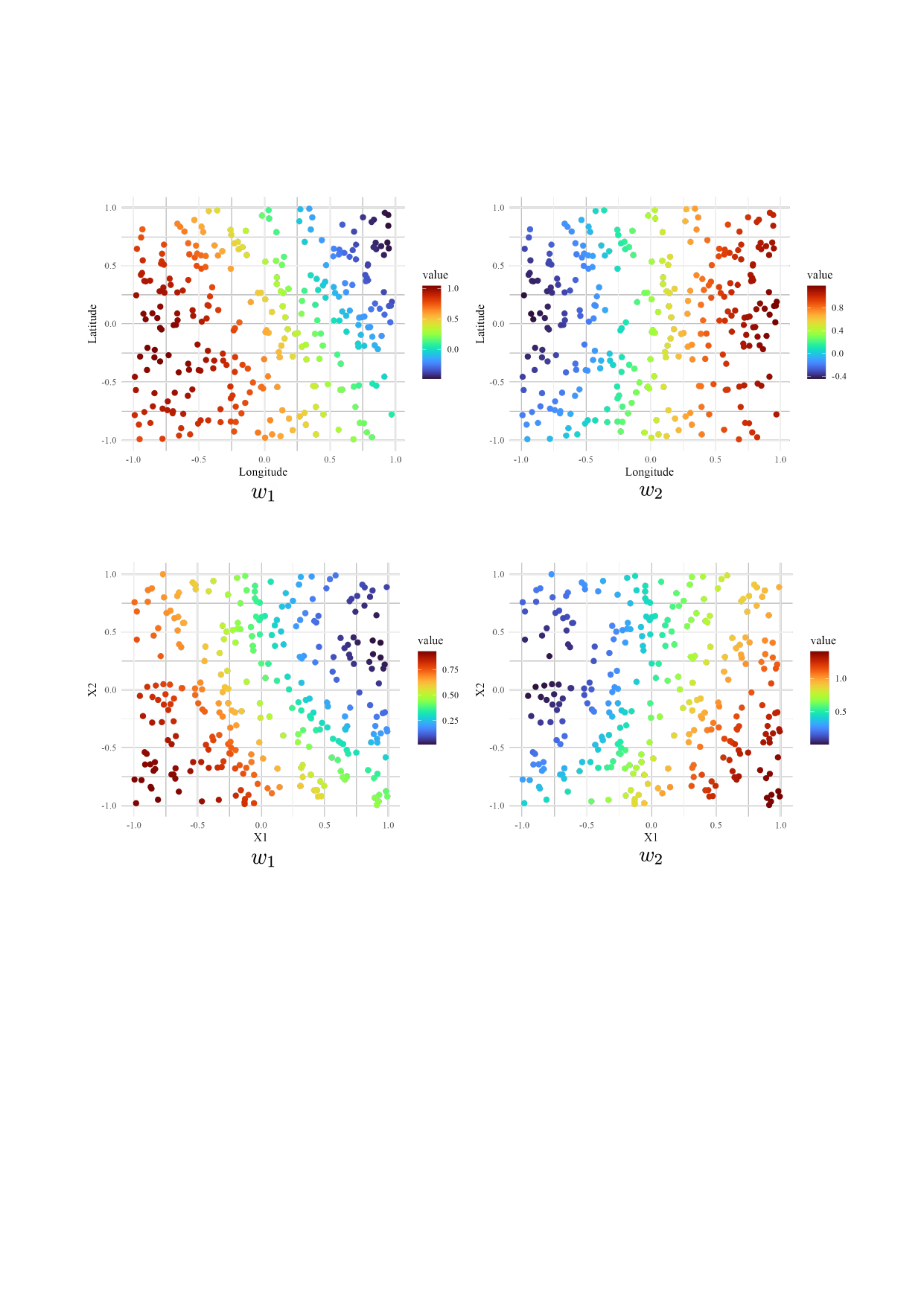}
  \end{center}
  \caption{Weights assigned by the proposed method in the scenario where the data-generating process varies depending on covariates. Left: weight for $f_1$, trained on the region where $x_1<0$. Right: weight for $f_2$, trained on the region where $x_1>0$. \label{fig: illustrate_cov}}
\end{figure}

\subsection*{Case 2: Weight Dependent on External Covariates}

We next consider a scenario where the data-generating process depends upon the model weights as a function of two-dimensional location information (i.e., external covariates), similar to \cite{cabel2022bayesian}. We uniformly sample $600$ locations ($s\in\R^2$) from square $[-1,1]^2 \subset \R^2$, whose first and second elements represent longitude and latitude, respectively. At each point, we generate covariates ${x}\in \R^5$, whose three elements $(x_3, x_4, x_5)$ are sampled from the standard normal distribution and the first two elements $(x_1, x_2)$ are sampled as follows: 
\begin{equation}\label{eq: dgp_sp}
    x_{1} = z_{1}, \qquad x_{2} = \rho z_{1} + \sqrt{1-\rho^2} z_{2},
\end{equation}
where $z_{1}$ and $z_{2}$ at all locations are sampled from $N(0, \Sigma)$ with kernel $\Sigma_{ij} = \exp(-d_{ij}/\phi)$. Here, $d_{ij}$ is the Euclidean distance between locations $s_i$ and $s_j$, $\phi = 0.5$ is the range parameter, and $\rho = 0.2$ is the correlation parameter. We examine two distinct domains: one where $x_1 > 0$ and the other corresponding to the remaining values. In the first domain, we define $\mu = x_1 - x_2^2/2$, while in the second domain, $\mu = x_1^2 + x_2^2$. The response variable $y$ is generated by adding the spatial random effect from a zero-mean Gaussian process with covariance kernel $(0.3)^2 \exp(-\|s-s'\|/0.3)$ and observation noise from $N(0, (0.7)^2)$ to $\mu$.

For our experiments, we randomly select $n=300$ points as the training dataset. With the five covariates, we run ordinary least squares on linear regression model $f_1$ for instances where the first element of $s$ is less than $0$, and we similarly perform $f_2$ for the remaining instances. We also prepare $10$ radial basis function kernels, where the central points are chosen by the $k$-means clustering from $600$ locations. Then, we compute the weights of the CDST~\eqref{model: CDST} by the EM Algorithm.

Figure~\ref{fig: illustrate_sp} displays estimated weights $w_1$ and $w_2$ of model~\eqref{model: CDST}. Large values of weight $w_1$ in the left panel are observed in the western region, meaning that $f_1$ properly explains data in the region, while the right panel suggests that $f_2$ accounts for the eastern region. Although each method alone does not account for the spatial structure, the ensemble methods are able to capture it.

\begin{figure}[t]
  \begin{center}
  \includegraphics[width=0.9\linewidth ]{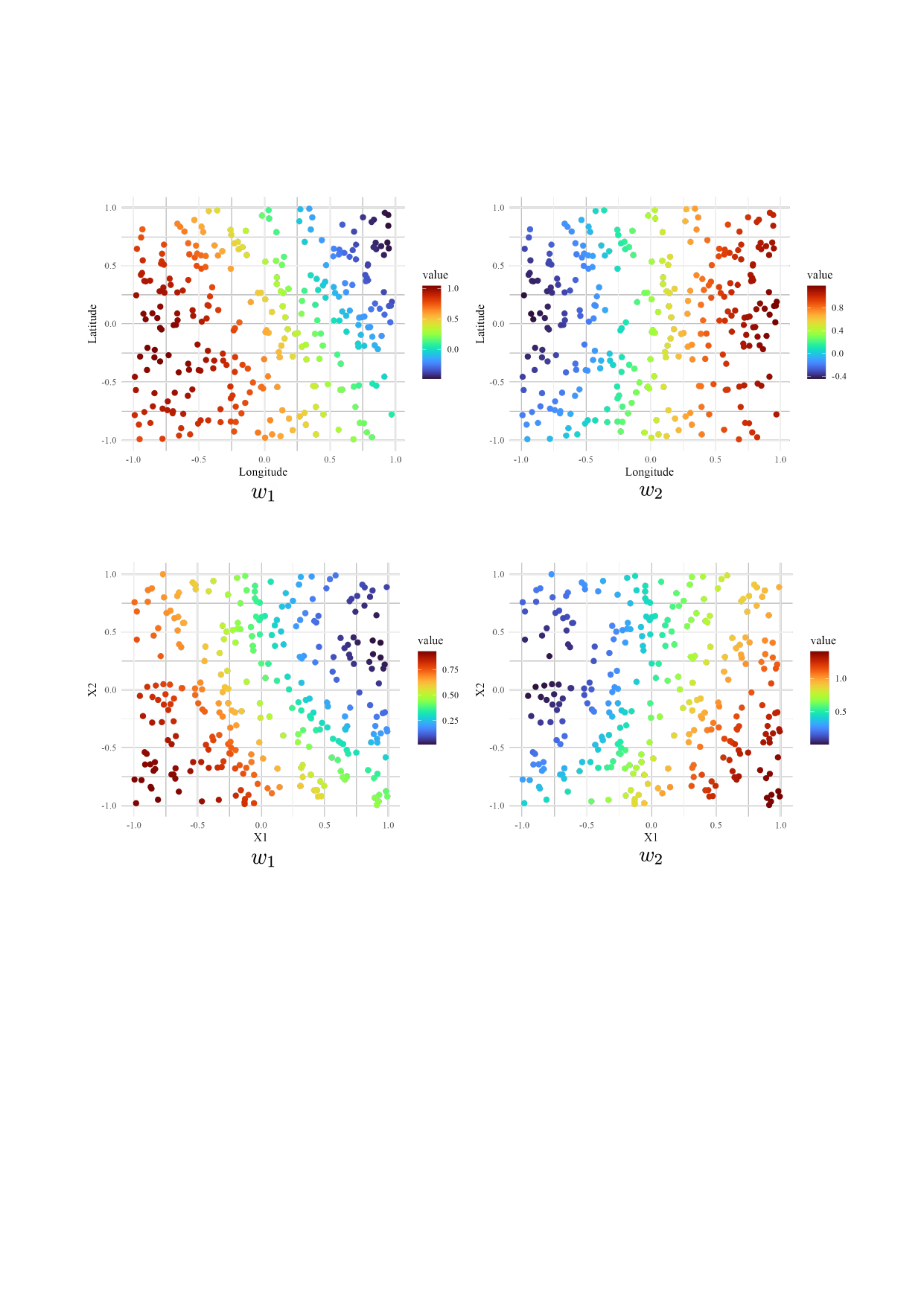}
  \end{center}
  \caption{Weights allocated by the proposed method in the scenario where the data-generating process depends on space. Left: weight for $f_1$, trained on the western region. Right: weight for $f_2$, trained on the eastern region.  \label{fig: illustrate_sp}}
\end{figure}

\subsection{Comparison of Prediction Performance}\label{subsec: comparison}
Here, we carry out Monte Carlo simulation studies to evaluate the proposed method compared to the existing methods in two cases, as considered in the previous section. As mentioned in Section~\ref{subsec: comp}, we implemented the CDST using two different algorithms: the EM algorithm (CDSTE, custom implementation) and the algorithm based on GAM (CDSTG, utilizing ``gam'' function in ``mgcv'' package in \texttt{R} language).

\subsection*{Case 1: Weight Dependent on Internal Covariates}

\begin{figure}[tb]
  \begin{center}
  \includegraphics[width=\linewidth ]{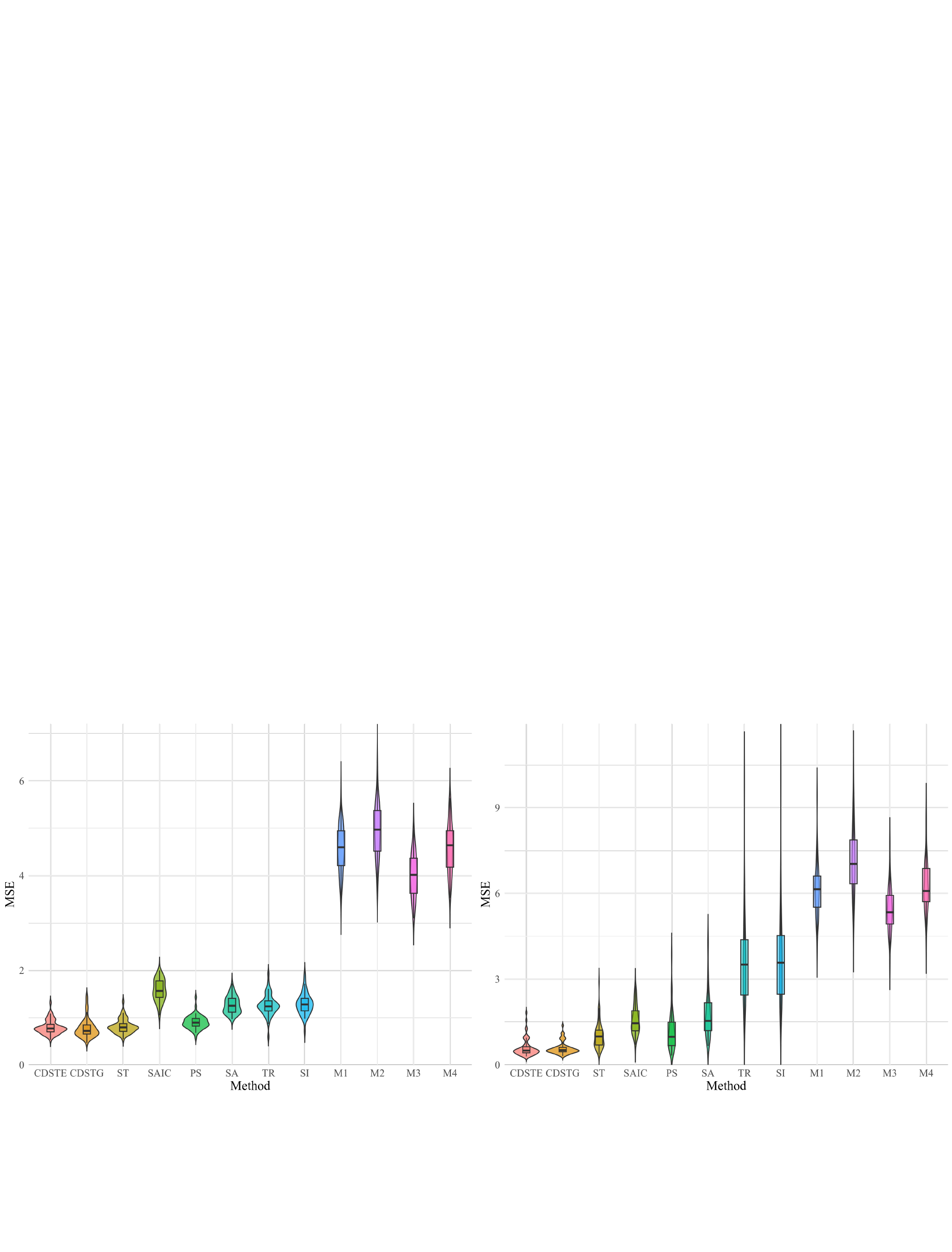}
  \end{center}
   \caption{Comparison of the six methods---linear regression (M1), additive model (M2), random forest (M3), Gaussian process regression (M4), threshold regression (TR), and single index regression (SI)---along with six ensemble predictions of (M1)--(M4)---covariate-dependent stacking by the EM algorithm (CDSTE), covariate-dependent stacking by GAM (CDSTG), probabilistic stacking (PS), vanilla stacking (ST), simple averaging (SA), and smoothed-AIC (SAIC) in each scenario. The left-hand panel corresponds to Scenario 1 and the right-hand panel to Scenario 2. \label{fig: comparison_cov}}
\end{figure}

We first examine a scenario in which the data-generating process is influenced by covariates. We generate $400$ covariates ${x}\in \R^5$, whose first two elements $(x_1, x_2)$ are uniformly sampled from square $[-1,1]^2$ and the other three $(x_3, x_4, x_5)$ are from the standard normal distribution. We prepare two scenarios, where the mean $\mu$ in each scenario is given by
\begin{itemize}
\item Scenario 1: $\mu = 2 (x_1 + x_2 + x_3) \mathbb{I}_{\{x_1<0\}}(x_1) + (4x_2^2-x_1 + x_4) \mathbb{I}_{\{x_1\ge 0\}}(x_1) $,
\item Scenario 2: $\mu = 2 (1+x_1)x_2 + (1-x_1) x_3^2$,
\end{itemize}
where $\mathbb{I}_{A}(x)$ denotes the indicator function, taking on a value of $1$ if $x\in A$, and $0$ otherwise. We then sample response variable $y$ from $N(\mu, (0.7)^2)$.
 
We repeatedly sample $100$ distinct datasets through the above procedure, and in each of them, $300$ points are utilized as training data and $100$ as test data. For CDSTE and CDSTG, we prepare the following four methods using all five covariates: (M1) linear regression by function ``lm'' in \texttt{R}; (M2) additive model with thin plate splines, as proposed by \cite{Hastie1986} and publicly available via the \texttt{R} package ``gam'' \citep{hastie2022gam}; (M3) random forest~\citep{breiman2001random} available via the \texttt{R} package “randomForest” \citep{Breiman2018R}; and (M4) Gaussian process regression \citep{rasmussen2006gaussian}, performed by \texttt{R} package “kernlab”~\citep{Karatzoglou2019kernel}. All hyperparameters are maintained at their default settings. We perform (M1)--(M4) and utilize six combinatorial methods: CDST with $10$ radial basis function kernels based on the $k$-means clustering as in Section~\ref{subsec: illustration} via the EM algorithm (CDSTE) and the optimization as GAM (CDSTG), vanilla stacking \citep[ST;][]{wolpert1992stacked} with weights optimized by leave-one-out cross-validation, probabilistic stacking~\citep[PS;][]{Capezza2021additive} using ``gamFactory'' package~\citep{gamFactory25} in \texttt{R}, simple averaging (SA) with equal weights, and smoothed-AIC \citep[SAIC;][]{buckland1997model} with AIC values of (M1) and (M2). Additionally, we implement two conventional methods in econometrics: threshold regression \citep[TR;][]{hansen2000} with a threshold set to $0$ and semiparametric single index regression using Ichimura method \citep[SI;][]{ICHIMURA199371, npPackage}. To measure prediction performance, we use the mean squared error (MSE) computed on the test datasets.

Figure~\ref{fig: comparison_cov} provides comparisons of the six single methods and six ensemble predictions. The violin plots are calculated based on $100$ datasets. CDSTE and CDSTG perform better than other ensemble methods. In scenario 1, they are almost the same as ST, yet in scenario 2, they clearly outperform ST. PS performs well, but not as well as the other stacking methods. This is because the constraint on the weights (non-negative and summing to one) limits its expressiveness, and in fact, it has been observed that CDST sometimes assigns weights that are negative or greater than one. (For plots of the weights, see the appendix.) We note that each of the methods used in the ensemble is either a simple method or a completely nonparametric method, and their individual performance is significantly inferior to methods such as TR with a known threshold and SI. However, when properly integrated, the proposed approach considerably surpasses them. 

\begin{figure}[tb]
  \begin{center}
  \includegraphics[width=\linewidth]{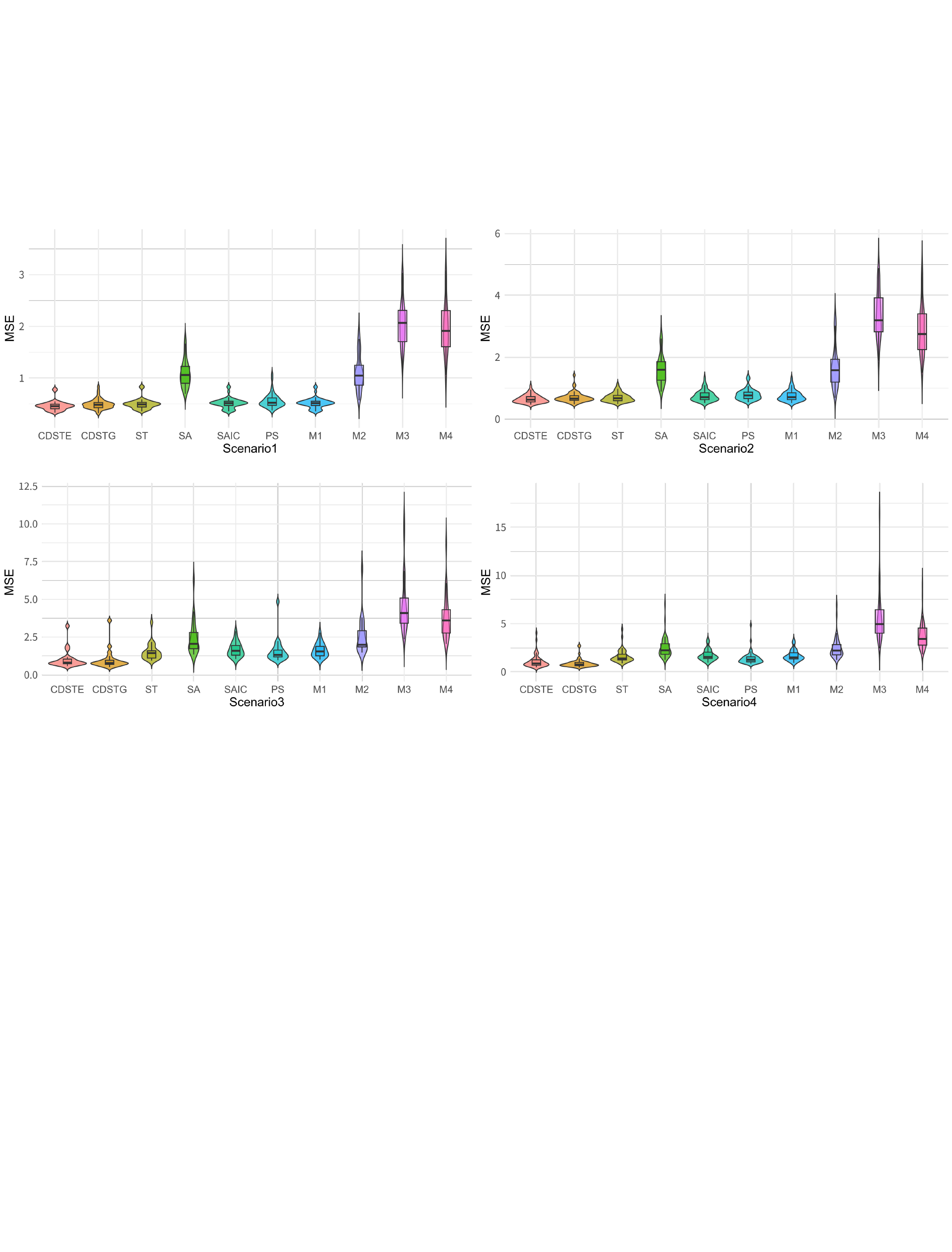}
  \end{center}
  \caption{Comparison of the four methods---additive model (M1), spatial random forest (M2), spatial simultaneous autoregressive lag model (M3), and geographically weighted regression (M4)---along with six ensemble predictions---covariate-dependent stacking by the EM algorithm (CDSTE), covariate-dependent stacking by GAM (CDSTG), probabilistic stacking (PS), vanilla stacking (ST), simple averaging (SA), and smoothed-AIC (SAIC) in each spatial scenario. The top left-hand panel corresponds to Scenario 1, the top right-hand panel to Scenario 2, the bottom left-hand panel to Scenario 3, and the bottom right-hand panel to Scenario 4. \label{fig: comparison_sp}}
\end{figure}

\subsection*{Case 2: Weight Dependent on External Covariates}

Next, consider a setting where the data-generating process depends on space. We uniformly sample $400$ locations from $[-1,1]^2$. At each location, we generate a covariate as \eqref{eq: dgp_sp}. We prepare four different scenarios, where the mean structure $\mu$ in each scenario is defined by
\begin{itemize}
\item Scenario 1: $\mu = w + x_{3}^2 \exp\{-0.3(s_{1}^2 + s_{2}^2)\} + s_{2} \sin(2x_{2})$,
\item Scenario 2: $\mu = 2w + \frac12  \sin(\pi x_{1}x_{2}) +  (x_{3} - 0.5)^2 + \frac12 x_{4} + \frac14 x_{5}$,
\item Scenario 3: $\mu = 2w + (s_{1}+1)x_{1} + (1-s_{1})x_{3}^2$,
\item Scenario 4: $\mu = 2(s_{1}+1)w + x_{1} + (1-s_{1})x_{3}^2$,
\end{itemize}
where $w$ is the spatial random effect generated from a zero-mean Gaussian process with covariance kernel $(0.3)^2 \exp(-\|s-s'\|/0.3)$. Then, we sample response variable $y$ from $N(\mu, (0.7)^2)$.

In the experiment, we repeat the above procedure $100$ times, creating independent datasets each time. Each dataset consists of 400 points, of which $300$ points are allocated as training data and the remaining $100$ points as test data. To perform CDSTE and CDSTG, we prepare the following four models using all five covariates: (M1) additive model with thin plate splines, proposed by \cite{Hastie1986} and publicly available via the \texttt{R} package ``gam'' \citep{hastie2022gam}; (M2) spatial random forest with 50 trees as proposed by \cite{Saha2023random} and available by the \texttt{R} package ``RandomForestsGLS'' \citep{saha2020randomforestsgls}; (M3) spatial simultaneous autoregressive lag model~\citep{Anselin1988} available through \texttt{R} package ``spdep'' \citep{bivand2013spdep}; and (M4) geographically weighted regression \citep{Brunsdon1998}, performed by ``spgwr'' \citep{lu2014gwmodel}, where the bandwidth is selected by leave-one-out cross-validation. The other hyperparameters are maintained at their default settings. We perform the individual models (M1)--(M4) and utilize six combinatorial methods: CDSTE, CDSTG, PS, ST, SA, and SAIC, as in the previous experiment. To measure prediction performance, we use MSE computed on the test datasets.

Figure~\ref{fig: comparison_sp} presents comparisons of the four single methods and six ensemble predictions. The violin plots are based on $100$ datasets. First, we examine scenarios 1 and 2: CDSTE and CDSTG outperform PS, SAIC and SA. Unsurprisingly, it also performs better than single methods. However, CDSTs and ST are comparable, which means that there is little advantage to varying the weights spatially, as method M1 dominates the others almost everywhere. In fact, weight estimations by CDSTs do not vary in space (see the appendix for details). Next, we observe scenarios 3 and 4, where CDSTE and CDSTG dominate the other methods, including ST. In scenario 3, as the first coordinate of $s$ increases, the spatial linear regressivity becomes stronger, while the spatial nonparametric regressivity becomes stronger in scenario 4, whence the weights of M2 and M4 are proportional to the increase in the first coordinate of $s$, respectively (see the appendix for details). This result supports the predictive performance and explicative power of the CDST.

\subsection{Application to Land Price Data}\label{sec:app}
We finally demonstrate the CDST through a large-scale spatio-temporal prediction problem. To this end, we use the ``Real Estate Database 2018-2022'' provided by At Home Co., Ltd. and focus on predicting land price in four prefectures (Tokyo, Kanagawa, Chiba, and Saitama) in the Kanto region, Japan. The importance of accurate real estate price prediction is exemplified by its diverse applications, such as enabling local governments to develop more precise land use plans and allowing financial institutions to conduct more rigorous risk assessments for real estate-backed loans. The dataset contains land prices (yen), as well as auxiliary information on each land. We use the observations from October to December in 2022 as the test sample (8768 observations in total) and those from August 2018 to September 2022 as the training sample (186149 samples in total). We adopt seven covariates, land area (LA), floor area ratio (FAR), building coverage ratio (BCR), walking minutes from the nearest train station on foot (MF) and by bus (MB), and two dummy variables regarding zoning regulations (ZR1 and ZR2). For location information, the longitude and latitude information of each land, name of the nearest train station, and name of the ward are available. Time information is assigned using each quarter as a single time unit. As a result, the training period corresponds to quarters 1 through 17, while the test period corresponds to quarter 18.

To construct the prediction models for land prices, we consider the following three types of models:
\begin{itemize}
\item[-]
{\bf Station-level model:} \ \ The datasets are grouped according to the nearest train stations, and ordinary linear regression with five covariates (LA, FAR, BCR, MF, and MB) is applied to each grouped sample.

\item[-]
{\bf Ward-level model:} \ \ The datasets are grouped according to the wards, and a regression model with parametric linear effects of the seven covariates and the nonparametric effect of time information is applied to each grouped sample.

\item[-]
{\bf Prefecture-level model:} \ \ The datasets are grouped according to the prefectures and 
an additive model with nonparametric effects of the five continuous covariates (LA, FAR, BCR, MF and MB), longitude, latitude and time information, as well as with parametric effects of two dummy variables (ZR1 and ZR2) is applied to each grouped sample.
\end{itemize}
Since the sample size available for model estimation increases in the order of station-, ward-, and prefecture-level models, we vary model complexity (e.g., number of parameters) across the three types of models. Note that these three models are sufficiently simple and explicative for practical use, and the following integrations do not turn the prediction process into a ``black box.''

We combine the above three models through CDST with longitude, latitude, and time information as $\tilde{x}$ in \eqref{model: CDST} (i.e., model weight varies over space and time) and basis functions of the form $\phi_m(x)=\exp\{-\|x_1-\tilde{c}_{m1}\|_2^2/2h_1^2 - (x_2-\tilde{c}_{m2})^2/2h_2^2\}$, where $x=(x_1, x_2)$ with two-dimensional location information $x_1\in \R^2$ and time information $x_2\in \R$, $\tilde{c}_{m1}\in\R^2$ and $\tilde{c}_{m2}\in \R$ are the centers of basis functions and $h_1$ and $h_2$ are range parameters. To capture detailed spatial variations over more than 5000 locations and ensure smoothness over the limited 18 time periods, we apply $k$-means clustering with $20$ clusters for the spatial dimension and $5$ clusters for the temporal dimension, yielding a total of $M=100$ basis functions, and we determine $\tilde{c}_{m1}$ and $\tilde{c}_{m2}$ as a possible combination of cluster centers. Accordingly, we set the spatial range parameter to $h_1=0.2$ for fine spatial resolution, and the temporal range parameter to $h_2=2$ to enforce smoothness in time. For comparison, we also apply PS, ST and SA with equal weights. In CDSTE, CDSTG, PS and ST, we adopt 10-fold cross-validation to learn the model weight. To conduct a scale-independent evaluation of prediction accuracy, we employ the absolute percentage error (APE), defined as $100|y-\hat{g}(x)|/y$ for each of the test samples.

\begin{figure}[t]
  \begin{center}
  \includegraphics[width=0.95\linewidth]{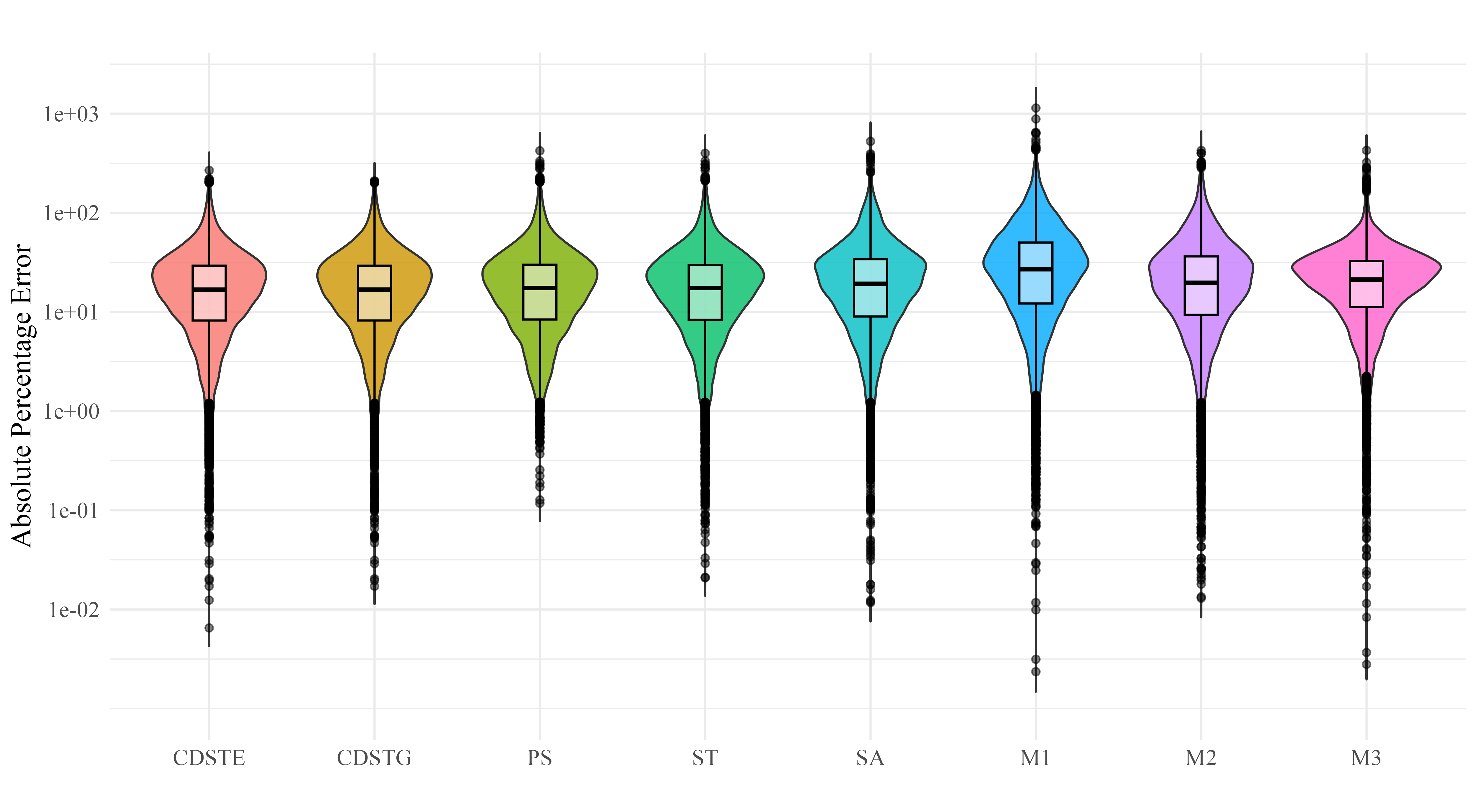}
  \end{center}
  \caption{Comparison of the three models: prefecture-level (M1), ward-level (M2), and station-level (M3) models, alongside five ensemble predictions: covariate-dependent stacking by the EM algorithm (CDSTE), covariate-dependent stacking by GAM (CDSTG), probabilistic stacking (PS), vanilla stacking (ST), and simple averaging (SA). \label{fig: comparison_app}}
\end{figure}

Figure~\ref{fig: comparison_app} comparatively analyzes the three single methods and their corresponding ensemble predictions. The distribution of APEs for all test samples across each method is represented through violin plots. Notably, the CDSTs exhibit fewer upper outliers, indicating a reduced frequency of significant prediction errors. The mean of the APEs for each model is as follows: 16.5 (CDSTE), 16.3 (CDSTG), 17.4 (PS), 17.2 (ST), 18.3 (SA), 34.9 (prefecture-level model), 23.8 (ward-level model), and 19.6 (station-level model). Subsequently, Figure~\ref{fig: weights_app} depicts the geographical distribution of weights assigned by CDSTE. The station-level model generally exhibits higher weights, particularly pronounced in Tokyo. This phenomenon may be attributed to Tokyo's complexity, characterized by its concentration of diverse types of districts and major stations. The ward-level model exerts an overall modest weight, potentially reflecting broader urban trends and temporal information. While the prefecture-level model generally has smaller weights, it can capture nuanced information, such as zoning patterns, which may not be addressed adequately by other models. The influence of this prefecture-level information is emphasized in some areas, such as Kanagawa.

\begin{figure}[t]
  \begin{center}
  \includegraphics[width=\linewidth]{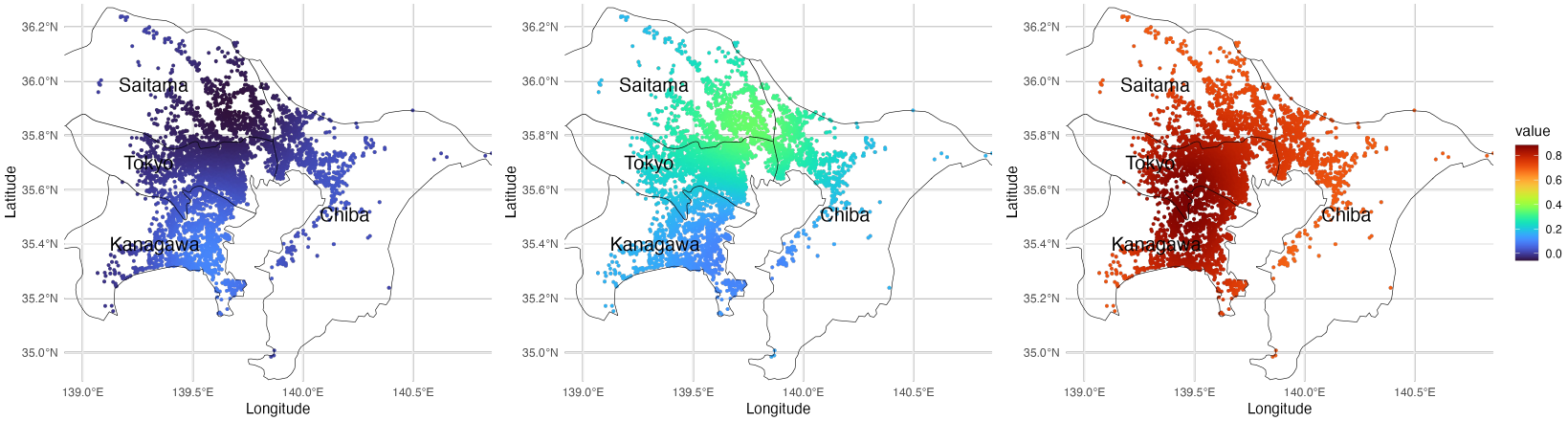}
  \end{center}
  \caption{Geographical distribution of weights assigned by the proposed method to the three models: prefecture-level (left), ward-level (center), and station-level (right) models.\label{fig: weights_app}}
\end{figure}

\subsection{Summary}
The numerical experiments and case study clearly demonstrate the advantages of our CDST, particularly in settings where the underlying data-generating mechanisms vary across different regions of the covariate space.
First, when relationships between base models and responses vary by location (spatial heterogeneity) or over time (temporal instability), the assignment of covariate‐dependent weights enables the final predictor to adapt to local variations. In both our simulations and the real estate case study, regions or time periods characterized by unique patterns were more accurately captured through locally adjusted weights, thereby enhancing predictive accuracy.
Second, the CDST proves particularly beneficial when the base models provide genuinely diverse perspectives (e.g., one model capturing smooth trends, another fitting local details, and yet another modeling temporal patterns). Conversely, if the base models are highly similar or if one model consistently dominates the others across all conditions, the advantages conferred by learning covariate‐dependent weights may be less pronounced.

\section{Concluding Remarks}\label{sec:conclusion}
This study proposes and evaluates a novel ensemble prediction method, CDST, which extends traditional stacking by allowing model weights to vary as a function of covariates of interest. Our theoretical analysis thus provides a solid foundation for understanding the method's behavior for large samples, while our empirical analysis demonstrates its practical utility and interpretability.
The simulation studies revealed that CDST consistently outperforms other ensemble methods, particularly in scenarios where the true data-generating process varies across the covariate space. This advantage is further corroborated by our application to large-scale land price prediction, where CDST exhibited superior predictive performance compared to both individual models and other ensemble methods.

Our findings have significant implications for researchers and practitioners dealing with intricate prediction tasks, especially in spatio-temporal contexts. The adaptability of CDST in accommodating heterogeneous relationships across the covariate space renders it a potent tool for enhancing prediction accuracy in diverse fields. Future research directions may include extending CDST to other problems, such as heterogeneous treatment effects, devising more computationally efficient approaches for exceptionally high-dimensional problems, and exploring its applicability in other domains, including climate modeling and financial forecasting.

\section*{Computer Programs}
Computer programs used for our numerical experiments in Section~\ref{sec:num} were developed for execution in the \texttt{R} statistical computing environment. The programs are available in the GitHub repository \href{https://github.com/TomWaka/CovariateDependentStacking}{https://github.com/TomWaka/CovariateDependentStacking}.

\section*{Acknowledgements}
This research was the result of the joint research with the Center for Spatial Information Science, the University of Tokyo (No. 1342) and used the following data: ``Real Estate Database 2018-2022'' provided by At Home Co., Ltd. of Japan.

\section*{Funding}
This work was supported by the Japan Society for the Promotion of Science (JSPS KAKENHI) grant numbers, 22KJ1041, 20H00080 and 21H00699 and Japan Science and Technology Agency (JST, ACT-X) grant number JPMJAX23CS.


\section*{Appendix A: Discussion on Oracle Inequality and Penalty}
First, we introduce the oracle inequality under a general loss function for estimates obtained through standard cross-validation. Second, we present the oracle inequality for estimators derived from penalized cross-validation. Finally, we discuss the advantages of the latter approach.

\subsection{Oracle Inequality for General Loss without Penalty}
Let $(S, \mS, P)$ be a probability space and $(X_1, Y_1),\ldots,(X_n, Y_n)$ be a set of independent and identically distributed random elements from $P$, denoted by $\mD$. We consider the following model:
\begin{equation*}
    Y = g_0(X) + \varepsilon,
\end{equation*}
where $g_0$ is a true function and $\varepsilon$ is a random error term. The prediction of $y$ given the new $x$ is of interest. To quantify the goodness of predictor $g(x)$, we set measurable loss function $L:\R\times\R\to \R$ and risk function $\E_{(X,Y)\sim P} [L(g(X), Y)]$. 

We randomly split dataset $\mD= \mD_0 + \mD_1$ by providing $\{0,1\}$-valued random variable (indicator) to each pair $(X_i,Y_i)$ and assigning the pair that has realized indicator $j$ to $\mD_j$ for $j=0$ and $1$. $\mD_0$ is used as training data and $\mD_1$ as test (validation) data. Let $P_I$ and $\E_{I}$ be the probability distribution and expectation operator over the random split assignments $I = (I_1, \ldots, I_n)$ (independent of $\mD$), $\mathbb{P}_j$ and $\E_{\mathbb{P}_j}$ be empirical measures respectively corresponding to $\mD_j$ and its expectation operator for $j=0$ and $1$, and $n_j$ be the cardinality of $\mD_j$. Assume $n_1 = c n$ with $0<c<1$. Suppose we have a countable set of measurable functions $\mG := \{ g_{\Theta}(x)=\sum_{j=1}^J w_{j,\Theta}(x)\fh_{j}(x) \mid \Theta \in \R^{JM+J} \}$, where $\fh_{j}(\cdot) \ (j=1,\ldots,J)$\ be a given base predictor and $w_{j,\Theta}(x)$ is a weight function parametrized by $\Theta$. Remark that $\Theta (= \Theta(\mD_0))$ depends on $\mD_0$ since we learn it in the training process. 

Using the test data $\mathcal{D}_1$, we define stacking predictor $\hat{g}$ as the minimizer of split-averaged empirical risk: 
\begin{equation*}
    \hat{g} = g_{\hat{\Theta}}(x;\mD_0,\mD_1) :=  \argmin_{g_{\Theta}\in\mG}  \E_I\Biggl[\frac{1}{n_1}\sum_{(X_i,Y_i)\in \mD_1} L\bigl(g_{\Theta(\mD_0)}(X_i),Y_i\bigr)\Biggr] .
\end{equation*}
Additionally, we denote the oracle predictor in $\mG$ by
\begin{align*}
    g^* &:= \argmin_{g_{\Theta}\in\mG} \E_{I} \left [ \E_{(X,Y)\sim P} \left[L(g_{\Theta(\mD_0)}(X),Y)\right] \right] \\
    & = \argmin_{g_{\Theta}\in\mG}  \E_I \left [ \E_{\mD_1\sim P^{\otimes n_1}} \left[\frac{1}{n_1}\sum_{(X_i,Y_i)\in \mD_1} L\bigl(g_{\Theta(\mD_0)}(X_i),Y_i\bigr) \right] \right],
\end{align*}
where $P^{\otimes n_1}$ represents the measure obtained by taking the Cartesian product of $n_1$ copies of the probability distribution $P$.

Before proceeding to the main theorem, we define bivariate function set $\mL:= \{ L(g(\cdot),\cdot): S \to \R \mid  g\in \mG \}$ and a Bernstein pair of a function. For a given measurable function $h:S\to\R$, $(M(h),v(h))\in \R^2$ is called a Bernstein pair if it satisfies
\begin{align*}
    M(h)^2 \E_P \left[\exp\left( \frac{|h|}{M(h)} \right) - 1 - \frac{|h|}{M(h)} \right] \le \frac{1}{2} v(h).
\end{align*} 
The existence of a Bernstein pair for a given function is equivalent to the moment condition of Bernstein's inequality \citep[refer to Chapter 2.2.4 in][]{van2023weak} and is also a weaker condition than the boundedness of the $L_{\infty}$-norm. For more details about Bernstein numbers, see \citet{vaart2006oracle}.

\begin{theorem}[Oracle Inequality for General Loss without Penalty]\label{thm:general}
For any $\delta>0$, $\alpha>0$ and $1\le p\le 2$, the following inequality holds:
\begin{align*}
    & \E_{I} \left[ \E_{(X,Y)\sim P} \left[L(\hat{g}(X),Y)\right]\right] \\
    & \le (1+2\alpha)\E_{I} \left[ \E_{(X,Y)\sim  P } \left[L(g^*(X),Y)\right]\right]
     + (4+6\alpha)\delta\\
    &\quad + (1+\alpha) \frac{16}{n_1^{1/p}} \log\left\{1+\mN(\delta, \mL,\|\cdot\|_{\infty})\right\} \sup_{\ell\in\mL} 
        \left\{ \frac{M(\ell)}{n_1^{1-1/p}} + \left(\frac{v(\ell)}{(\E_P [\ell])^{2-p}}\right)^{1/p}\left( \frac{1+\alpha}{\alpha}\right)^{2/p-1} \right\} .
\end{align*}
\end{theorem}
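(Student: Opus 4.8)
The plan is to follow the weighted-concentration route for cross-validation selectors of \citet{vaart2006oracle}: reduce the statement to a uniform deviation bound for the test-fold empirical process, discretize the loss class by a sup-norm net, apply Bernstein's inequality element-wise with a union bound, and finally convert the variance part of the Bernstein bound into a small multiple of the oracle risk by a Young-type inequality. Throughout I would work conditionally on the training fold and on the realized split, so that every $\ell_g := L(g(\cdot),\cdot)\in\mL$ is a fixed measurable function and the test observations are i.i.d.\ from $P$ and independent of the library $\mG$; the averaging $\E_I$ is linear and is reinstated only at the last step.

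First I would record the comparison inequality. Writing $R(g)=\E_P[\ell_g]$, the definition of $\hat g$ as the minimizer of $\E_{\mathbb{P}_1}[\ell_g]$ gives $\E_{\mathbb{P}_1}[\ell_{\hat g}]\le \E_{\mathbb{P}_1}[\ell_{g^*}]$, whence
\[
R(\hat g)-R(g^*)\ \le\ \big(R(\hat g)-\E_{\mathbb{P}_1}[\ell_{\hat g}]\big)+\big(\E_{\mathbb{P}_1}[\ell_{g^*}]-R(g^*)\big).
\]
The second summand has conditional mean zero, while the first is dominated by $\sup_{g\in\mG}\big(R(g)-\E_{\mathbb{P}_1}[\ell_g]\big)$; hence everything reduces to controlling this supremum by a fraction of $R(g)$ plus a remainder. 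Passing to a minimal $\delta$-net of $\mL$ in $\|\cdot\|_\infty$ costs an additive $\delta$ on both the empirical and the population side (this is the source of the $(4+6\alpha)\delta$ term after the multiplicative bookkeeping) and replaces the supremum over $\mG$ by a maximum over the $N:=\mN(\delta,\mL,\|\cdot\|_\infty)$ net points.

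Next I would apply Bernstein's inequality to each net element: since the centered variables $\ell(Z_i)-\E_P[\ell]$ over the test points admit the Bernstein pair $(M(\ell),v(\ell))$, a union bound yields, for every threshold, an exponential tail governed by $v(\ell)$ and $M(\ell)$ together with the factor $\log\{1+N\}$. The decisive step is then to linearize the threshold as a multiple $\alpha R(\ell)$ of the risk plus a remainder and to absorb the variance contribution $\sqrt{v(\ell)\,\log N/n_1}$ into $\alpha R(\ell)$: solving the quadratic Bernstein tail for the threshold and applying an interpolated Young inequality with conjugate exponents tied to $p$ factors $R(\ell)^{(2-p)/p}$ out of $v(\ell)$ and produces the self-normalized constant $(v(\ell)/(\E_P[\ell])^{2-p})^{1/p}$, the gain $((1+\alpha)/\alpha)^{2/p-1}$, and the overall scaling $n_1^{-1/p}$; the range part contributes the companion term $M(\ell)/n_1^{1-1/p}$, whose product with the $n_1^{-1/p}$ prefactor is the expected $M(\ell)/n_1$. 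Integrating the resulting tail through $\E[U_+]=\int_0^\infty P(U>x)\,dx$ collapses the entropy contribution to the stated $\log\{1+N\}$ prefactor; substituting back into the comparison inequality gives the conditional bound $R(\hat g)\le(1+2\alpha)R(g^*)+(4+6\alpha)\delta+(\text{remainder})$, and taking $\E_I$ finishes.

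I expect the main obstacle to be this linearization-and-trade-off step. The difficulty is to choose the threshold split and the Young exponents so that three things happen at once and with exactly the claimed constants: the multiplicative factor on the oracle risk is precisely $(1+2\alpha)$, the variance term self-normalizes through $(\E_P[\ell])^{2-p}$ to the $p$-interpolated supremum uniformly over $1\le p\le 2$, and the tail integral produces $\log\{1+N\}$ rather than a full entropy integral. Keeping these constants aligned simultaneously across all net elements and all admissible $p$ is the delicate accounting, whereas the concentration and discretization steps are routine.
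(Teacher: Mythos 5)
Your proposal is correct and follows essentially the same route as the paper's proof: an ERM comparison inequality with $(1+\alpha)$-weighted empirical risk, a one-step sup-norm $\delta$-net discretization (the source of the $(4+6\alpha)\delta$ term), and a self-normalized Bernstein-type maximal inequality over the net producing the $\log\{1+\mN(\delta,\mL,\|\cdot\|_\infty)\}$ remainder with the $p$-interpolated variance term. The only real difference is bookkeeping: you re-derive the maximal inequality from scratch (Bernstein plus union bound plus the Young-type risk trade-off and tail integration), whereas the paper simply invokes Lemma~2.2 of \citet{vaart2006oracle}, and your one-sided decomposition --- exploiting that the $g^*$ deviation term has conditional mean zero rather than bounding it by a second supremum --- would if anything yield slightly sharper constants than the stated $(1+2\alpha)$ and $16$.
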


\subsection{Oracle Inequality for General Loss with Penalty}
We consider the same setting as in the previous section but add a measurable penalty function $Q_{\lambda}(\Theta)$ with scale parameter $\lambda$ to the objective function and consider the following estimator:
\begin{equation*}
\hat{g} := \argmin_{g_{\Theta}\in\mG}  \E_I\Biggl[\frac{1}{n_1}\sum_{(X_i,Y_i)\in \mD_1} L\bigl(g_{\Theta(\mD_0)}(X_i),Y_i\bigr) + \frac{1}{n} Q_{\lambda}(\Theta) \Biggr] .
\end{equation*}
We define the penalized oracle predictor in $\mG$ as 
\begin{align*}
    g^* &:= \argmin_{g_{\Theta}\in\mG} \E_{I} \left [ \E_{(X,Y)\sim P} \left[L(g_{\Theta(\mD_0)}(X),Y)\right]  + Q_{\lambda}(\Theta) \right] \\
    & = \argmin_{g_{\Theta}\in\mG}  \E_I \left [ \E_{\mD_1\sim P^{\otimes n_1}} \left[\frac{1}{n_1}\sum_{(X_i,Y_i)\in \mD_1} L\bigl(g_{\Theta(\mD_0)}(X_i),Y_i\bigr) \right] + \frac1n Q_{\lambda}(\Theta) \right]
\end{align*}
and evaluate how $\hat{g}$ can approximate $g^*$.

\begin{theorem}[Oracle Inequality for General Loss with Penalty]\label{thm:penalty}
For any $\delta>0$, $\alpha>0$, $0< p\le 1$ and $0< q \le 1$, the following inequality holds:
\begin{align*}
    & \E_{I} \left[ \E_{(X,Y)\sim P} \left[L(\hat{g}(X),Y)\right]\right] \\
    & \le (1+2\alpha)\E_{I} \left[ \E_{ (X,Y)\sim P } \left[L(g^*(X),Y)\right] +\frac1n Q_{\lambda}(\Theta^*)  \right]
     + (4+6\alpha)\delta\\
    &\quad + (1+\alpha) \frac{16}{n_1} \left\{\log\left(1+D_q+\mN(\delta, \mL,\|\cdot\|_{\infty})\right)\right\}^{1/q} \sup_{\ell\in\mL} 
        \left(   \frac{{(1+\alpha)}^{1-q}  M(\ell)}{C_q {\alpha}^{1-q}  Q_{\lambda}(\Theta)^{1-q}} \right)^{1/q} \\
    &\quad + (1+\alpha) \frac{16}{n_1} \left\{\log\left(1+D_p+\mN(\delta, \mL,\|\cdot\|_{\infty})\right)\right\}^{1/p} \sup_{\ell\in\mL} 
        \left( \frac{ {(1+\alpha)}^{2-p} v(\ell)}{C_p  {\alpha}^{2-p}  \E_P [\ell] Q_{\lambda}(\Theta)^{1-p}} \right)^{1/p},
\end{align*}
where $C_p>0$ and $D_p\ge 0$ are constants equal to $1$ and $0$ for $p=1$. Here, $Q_{\lambda}(\Theta)$ is a variable depending on $\ell\in \mL$.
\end{theorem}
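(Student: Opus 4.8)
The plan is to run the classical oracle-inequality template behind Theorem~\ref{thm:general}, now inserting the penalty $\tfrac1n Q(\Theta)$ as a complexity-controlling weight. The backbone is the same three-move structure: exploit the optimality of $\hat g$; replace empirical risk by true risk via a Bernstein-type concentration bound made uniform over $\mG$ through the covering number $\mN(\delta,\mL,\|\cdot\|_\infty)$; and rearrange. The genuinely new feature is that the penalty sits on both sides of the basic inequality, so the surplus penalty produced by the optimality of $\hat g$ must be engineered to ``pay for'' the stochastic fluctuation of the empirical process, which is precisely what pushes $Q(\Theta)$ into the denominators of the two error terms.

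First I would write the basic inequality coming from $\hat g$ being the minimizer of the penalized split-averaged empirical risk: for the penalized oracle $g^*$ with parameter $\Theta^*$,
\begin{equation*}
\E_{I}\!\left[\E_{\mathbb{P}_1}[L(\hat g(X),Y)] + \tfrac1n Q(\Thh)\right] \le \E_{I}\!\left[\E_{\mathbb{P}_1}[L(g^*(X),Y)] + \tfrac1n Q(\Theta^*)\right].
\end{equation*}
Adding and subtracting $\E_{P}[L]$ on both sides reduces matters to controlling the centered empirical process $(\E_{\mathbb{P}_1}-\E_{P})[\ell]$ uniformly in $\ell\in\mL$, and the control must be sharp enough that the fluctuation is dominated by a small multiple of $\E_{P}[\ell]$ together with a penalty-scaled remainder that the left-hand $Q(\Thh)$ can absorb.

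The heart of the argument is the concentration step. For each fixed $\ell$ I would apply Bernstein's inequality associated with its Bernstein pair $(M(\ell),v(\ell))$, whose exponent splits into a sub-Gaussian regime governed by the variance $v(\ell)$ and a sub-exponential regime governed by $M(\ell)$; a union bound over a $\delta$-net of cardinality $\mN(\delta,\mL,\|\cdot\|_\infty)$ then upgrades this to a uniform statement. The modification relative to Theorem~\ref{thm:general} is to normalize the deviation by a power of $Q(\Theta)$ before taking the supremum, so that heavily-penalized functions are automatically downweighted; this manufactures the factors $Q(\Theta)^{1-q}$ and $Q(\Theta)^{1-p}$ in the two denominators. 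Inverting the tail bound and decoupling the two regimes with a generalized Young/H\"older-type inequality---one that collapses to an identity at the endpoint, yielding $C_1=1$, $D_1=0$---produces the two separate supremum terms carrying exponents $1/q$ (the $M(\ell)$ term) and $1/p$ (the $v(\ell)$ term). The free parameters $\alpha$ and $\delta$ survive into the final bound, calibrating the multiplicative constant $(1+2\alpha)$ and the additive slack $(4+6\alpha)\delta$.

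I expect the main obstacle to be exactly this penalty-weighted concentration step: one must choose the normalization so that the penalty surplus $\tfrac1n\E_{I}[Q(\Thh)]$ coming from the basic inequality precisely cancels the complexity injected by the supremum over $\mG$, while simultaneously keeping the two Bernstein regimes cleanly separated with explicit constants. Balancing the free exponents $p,q\in(0,1]$ against the penalty powers, and verifying that the $\{\log(1+D+\mN)\}^{1/p}$ factor together with the $n_1^{-1}$ rate persists through the inversion, is the delicate accounting to which the remainder of the proof reduces.
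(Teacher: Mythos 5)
Your proposal follows essentially the same route as the paper's own proof: the paper likewise reruns the Theorem~\ref{thm:general} template with the penalty folded into the optimality of $\hat g$, discretizes the supremum over a $\delta$-net in $\|\cdot\|_{\infty}$ (which yields the $(4+6\alpha)\delta$ slack), and converts the penalty surplus into the $Q(\Theta)$-normalized maximal terms built from the Bernstein pairs $(M(\ell),v(\ell))$. The only difference is one of packaging: where you would re-derive the penalty-weighted maximal inequality from Bernstein's inequality, a union bound over the net, and a Young-type peeling step, the paper outsources exactly that step to Lemma~3.1 of \citet{vaart2006oracle} together with ``simple algebra.''
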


One of the principal distinctions between Theorem~\ref{thm:general}, which lacks a penalty term, and the aforementioned theorem is the presence of $\frac{1}{n}Q_{\lambda}(\Theta^*)$. This discrepancy diminishes as sample size increases. Another notable difference is the division of $M(\ell)$ and $v(\ell)/\E_P [\ell]$ by $Q_{\lambda}(\Theta)^{1-p}$. For squared loss (or absolute loss) functions, $M(\ell)$ and $v(\ell)/\E_P[\ell]$ are approximately equivalent to $\|g-g^*\|_{\infty}^2$ (or $\|g-g^*\|_{\infty}$, respectively). Under these circumstances, $M(\ell)$ and $v(\ell)/\E_P [\ell]$ may potentially diverge. However, by implementing regularization, for instance, by setting $Q_{\lambda}(\Theta) \approx \|g\|_{\infty}^{1/(1-p)}$, it is possible to avoid divergence, provided that $\|g\|_{\infty}$ remains bounded ($f_1 \approx f_2$ means that the ratio of two functions, $f_1/f_2$, is bounded above and below by positive constants). Although configuring the penalty function in this manner is practically unusual due to computation, this explains the benefits of regularization.

\section*{Appendix B: Proofs}

This section presents the technical proofs.

\begin{proof}[Proof of Theorem~\ref{thm:qdrt}]
    In Theorem~\ref{thm:penalty}, take $p=q=1$ and $\alpha = \delta = n_1^{-1/2}$. The covering number of $\mL$ with respect to the uniform norm is $\mN(\delta, \mL,\|\cdot\|_{\infty})\lesssim \delta^{-d-1}$. A Bernstein pair of $\ell\in \mL$ is $v(\ell)= 2\E_P[(g-g_0)^2] \left(e B+ 8t^{-2}\left\|\E_{P}\left[e^{t|\varepsilon|}\given X\right] \right\|_{\infty} \right)$ and $M(\ell)= 4 \max\{t^{-1},1\} \max \{B,1\}$, where $g$ corresponds to $\ell$.
\end{proof}

\begin{proof}[Proof of Theorem~\ref{thm:general}]
    We first consider the following decomposition for any $\alpha>0$:
    \begin{align*}
        &\int \int L(\hat{g}(X),Y) \mathrm{d}P \mathrm{d} P_I\\
        & \leq \int \int  L(g^*(X),Y) \mathrm{d}(1+\alpha)\mathbb{P}_1 \mathrm{d} P_I- \int \int L(\hat{g}(X),Y) \mathrm{d}\{(1+\alpha)\mathbb{P}_1 - P\}\mathrm{d} P_I \\
        & = \int \int  L(g^*(X),Y) \mathrm{d}(1+2\alpha)P\mathrm{d} P_I + \int \int L(g^*(X),Y)\mathrm{d}\{(1+\alpha)\mathbb{P}_1 - (1+2\alpha)P\}\mathrm{d} P_I\\
        &\quad + \int \int -L(\hat{g}(X),Y)  \mathrm{d} \{(1+\alpha)\mathbb{P}_1 - P\}\mathrm{d} P_I \\
        & \leq  (1+2\alpha) \E_I \left[\E_{ (X,Y)\sim P } \left[L(g^*(X),Y)\right]\right] + 
        \E_{I}\left[ \sup_{g\in\mG} \int L(g(X),Y) \mathrm{d}\{(1+\alpha)(\mathbb{P}_1 - P)-\alpha P\}\right]  \\
        &\quad + \E_{I}\left[\sup_{g\in\mG} \int -L(g(X),Y)\mathrm{d} \{(1+\alpha)(\mathbb{P}_1 - P)+\alpha P \}\right].
    \end{align*}
    
    To obtain the maximal inequalities of the second and third terms on the right-hand side, we utilize the chaining argument \citep{vershynin2018high,gine2021mathematical}. Because the same discussion and upper bound are valid for both the third and second terms, we focus solely on the second term. First, we fix one of the fewest $\delta$-net, denoted by $\mL_{net}$, of countable measurable function set $\mL:= \{ L(g(\cdot),\cdot): X \to \R \mid  g\in \mG \}$ with respect to $L_{\infty}$-distance. Then, we have
    \begin{align}
        &\E_{I}\left[\sup_{g\in\mG} \int L(g(X),Y) \mathrm{d}\{(1+\alpha)(\mathbb{P}_1 - P)-\alpha P\} \right] \nonumber \\
        &\le \E_{I}\left[\sup_{\ell,\ell'\in\mL: \| \ell-\ell'\|_{\infty}<\delta} \int \left\{\ell(X,Y)-\ell'(X,Y)\right\}\mathrm{d}\{(1+\alpha)(\mathbb{P}_1 - P)-\alpha P\}\right] \nonumber \\
        &\quad + \E_{I}\left[\sup_{\ell\in\mL_{net}} \int \ell(X,Y)\mathrm{d}\{(1+\alpha)(\mathbb{P}_1 - P)-\alpha P\} \right]. \label{ineq:chain}
    \end{align}
    The first term on the right-hand side is the supremum of the variation within the $\delta$-neighborhoods. For any pair $\ell,\ell'\in\mL$ satisfying $\| \ell-\ell'\|_{\infty}<\delta$, we have
    \begin{align}
            &\int \left\{\ell(X,Y)-\ell'(X,Y)\right\}\mathrm{d}\{(1+\alpha)(\mathbb{P}_1 - P)-\alpha P\} \nonumber\\
            &\leq (1+\alpha) \int \left|\ell(X,Y)-\ell'(X,Y)\right|\mathrm{d}\mathbb{P}_1  + (1+2\alpha) \int \left|\ell(X,Y)-\ell'(X,Y)\right|\mathrm{d}P \nonumber \\
            &\le (2+3\alpha)\delta. \label{eq: delta-neighbor}
    \end{align}
    The second term is the supremum over $\delta$-net and can be bounded from above by Lemma~2.2 in \cite{vaart2006oracle}: 
    \begin{align}
        &(1+\alpha) \E_{I}\left[ \sup_{\ell\in\mL_{net}} \int  \left[\ell(X,Y)\right] \mathrm{d}\{ (\mathbb{P}_1 - P)-\alpha' P \}\right] \nonumber\\
        &\leq (1+\alpha) \frac{8}{n_1^{1/p}} \log\left(1+\mN(\delta, \mL,\|\cdot\|_{\infty})\right) \sup_{\ell\in\mL} 
        \left\{ \frac{M(\ell)}{n_1^{1-1/p}} + \left(\frac{v(\ell)}{(\alpha'\E_P [\ell])^{2-p}}\right)^{1/p} \right\}, \label{eq: delta-net}
    \end{align}
    where $\alpha' = \alpha/(1+\alpha)$. Substituting \eqref{eq: delta-neighbor}~and~\eqref{eq: delta-net} into \eqref{ineq:chain} concludes the proof.
\end{proof}

\begin{proof}[Proof of Theorem~\ref{thm:penalty}]
The basic flow is similar to the Proof of Theorem~\ref{thm:general}, but we add the penalty term to $L(g(\cdot),\cdot)$ and follow the same argument. Note that Lemma~3.1 in \cite{vaart2006oracle} and simple algebra provide the bound corresponding to the second term in~\eqref{ineq:chain}.
\end{proof}

\section*{Appendix C: EM algorithm}

In this section, we derive the EM algorithm, which is also employed in our numerical experiments. This discussion continues from Section 2.2 of the main text.

To obtain $\Thh$ as well as the tuning parameters, we consider the following working model for $y_i\in\R$:
\begin{equation}\label{working-model}
y_i|\gamma_1,\ldots,\gamma_J\sim N\Big(\sum_{j=1}^J \big\{\mu_j + E(x_i)^\top \gamma_j\big\}\fh_{j, -i}(x_i), \sigma^2\Big), \qquad \gamma_j\sim N(0, \tau_j^2I_M), 
\end{equation}
where $\sigma^2\in\R_+$ is an additional (nuisance) parameter.
Note that, under working model \eqref{working-model}, the conditional posterior mode of $\Theta$ given $\{\gamma_j\}$ is equivalent to \eqref{optimal-weight} with $\lambda_j=\sigma^2/\tau_j^2$.
We then employ an EM algorithm \citep{dempster1977maximum} to simultaneously estimate $\{\mu_j\}$, $\{\tau_j\}$ and $\sigma^2$ by considering $\gamma_j$ as unobserved random effects.  

The conditional distribution of $\gamma\in \R^{JM}$ given the observed data is $N(m_\gamma, S_\gamma)$ with
\begin{equation}\label{E-step}
m_\gamma=\frac{1}{\sigma^2} S_\gamma W^{\top} \bigg(y-\sum_{j=1}^J \mu_j F_j\bigg), \ \ \ \ 
S_\gamma=\bigg(\frac{W^\top W}{\sigma^2} + D\otimes I_M\bigg)^{-1},
\end{equation}
where $y=(y_1,\ldots,y_n)^{\top}$, $F_j=(\fh_{j, -1}(x_1),\ldots,\fh_{j, -n}(x_n))^\top$, $D=\mathrm{diag}(1/\tau_1^2,\ldots, 1/\tau_J^2)\in \R^{J\times J}$, and $W$ is an $n\times JM$ matrix whose $i$th row is $(E(x_i)^\top\fh_{1, -i}(x_i),\ldots, E(x_i)^\top\fh_{J, -i}(x_i))$.
Then, the expectation of the complete log-likelihood to be maximized in the \textit{M-step} can be expressed as
\begin{align*}
Q_{\lambda}(\Psi)=-&n\log\sigma -\frac{1}{2\sigma^2}\sum_{i=1}^n \bigg(y_i^{\ast} -\sum_{j=1}^J \mu_j \fh_{j,-i}\bigg)^2   \\
       &  - M\sum_{j=1}^J\log \tau_j - \sum_{j=1}^J \frac{m_{\gamma (j)}^\top m_{\gamma (j)}+{\rm tr}(S_{\gamma(jj)}) }{2\tau_j^2},
\end{align*}
where $y_i^{\ast}=y_i-\sum_{j=1}^JE(x_i)^\top m_{\gamma(j)}\fh_{j,-i}$, $m_{\gamma(j)}$ and $S_{\gamma(jj)}$ are respectively the mean vector and covariance matrix of the conditional distribution of $\gamma_j$ given the observed data; and $\Psi$ is a collection of unknown parameters, $\{\mu_j\}$, $\{\tau_j\}$ and $\sigma^2$. Then, the maximization steps of $Q_{\lambda}(\Psi)$ are obtained in closed form as shown in Algorithm~\ref{algo:EM}.

\begin{algo}\label{algo:EM}
With initial values, $\Psi_{(0)}=(\mu_{1(0)},\ldots,\mu_{J(0)}, \tau_{1(0)},\ldots,\tau_{J(0)}, \sigma^2_{(0)})$ and $r=0$, repeat the following steps until convergence: 
\begin{enumerate}
\item
(E-step) \ \ Using $\Psi_{(r)}$, compute the posterior expectation and covariance matrix of $\gamma_j$ by \eqref{E-step}. 

\item 
(M-step) \ \ Update $\Psi$ as follows: 
\begin{align*}
(\mu_{1(r+1)},\ldots,\mu_{J(r+1)}) \ &\leftarrow \ \left(\sum_{i=1}^n F_iF_i^{\top}\right)^{-1}\sum_{i=1}^n y_i^{\ast}F_i,\\
\sigma^2_{(r+1)} \  &\leftarrow \ \frac1n\sum_{i=1}^n\bigg(y_i^{\ast} -\sum_{j=1}^J \mu_{j(r+1)} \fh_{j,-i}\bigg)^2,\\
\tau_{j(r+1)}^2 \  &\leftarrow \ \frac1M\left\{m_{\gamma (j)}^\top m_{\gamma (j)}+{\rm tr}(S_{\gamma(jj)})\right\}, \ \ \ j=1,\ldots,J.
\end{align*}
\end{enumerate}
\end{algo}

Given parameter estimate $\widehat{\Psi}$, the plug-in posterior distribution of $\gamma_j$ can be obtained by \eqref{E-step}. In particular, the point estimates of $\gamma_j$ can be obtained as plug-in posterior expectation $m_{\gamma}$ in \eqref{E-step} with the parameters replaced by their estimates. Then, the estimated model weight is given by $\widehat{w}_j(x)=\widehat{\mu}_j + E(x)^\top \widehat{\gamma}_j$, which achieves a flexible ensemble predictor of form \eqref{model: CDST} on arbitrary point $x$.

\section*{Appendix D: Additional Results of the Numerical Study}

This section presents the supplementary results not included in Section~4.

\begin{figure}[tb]
  \begin{center}
  \includegraphics[width=0.9\linewidth]{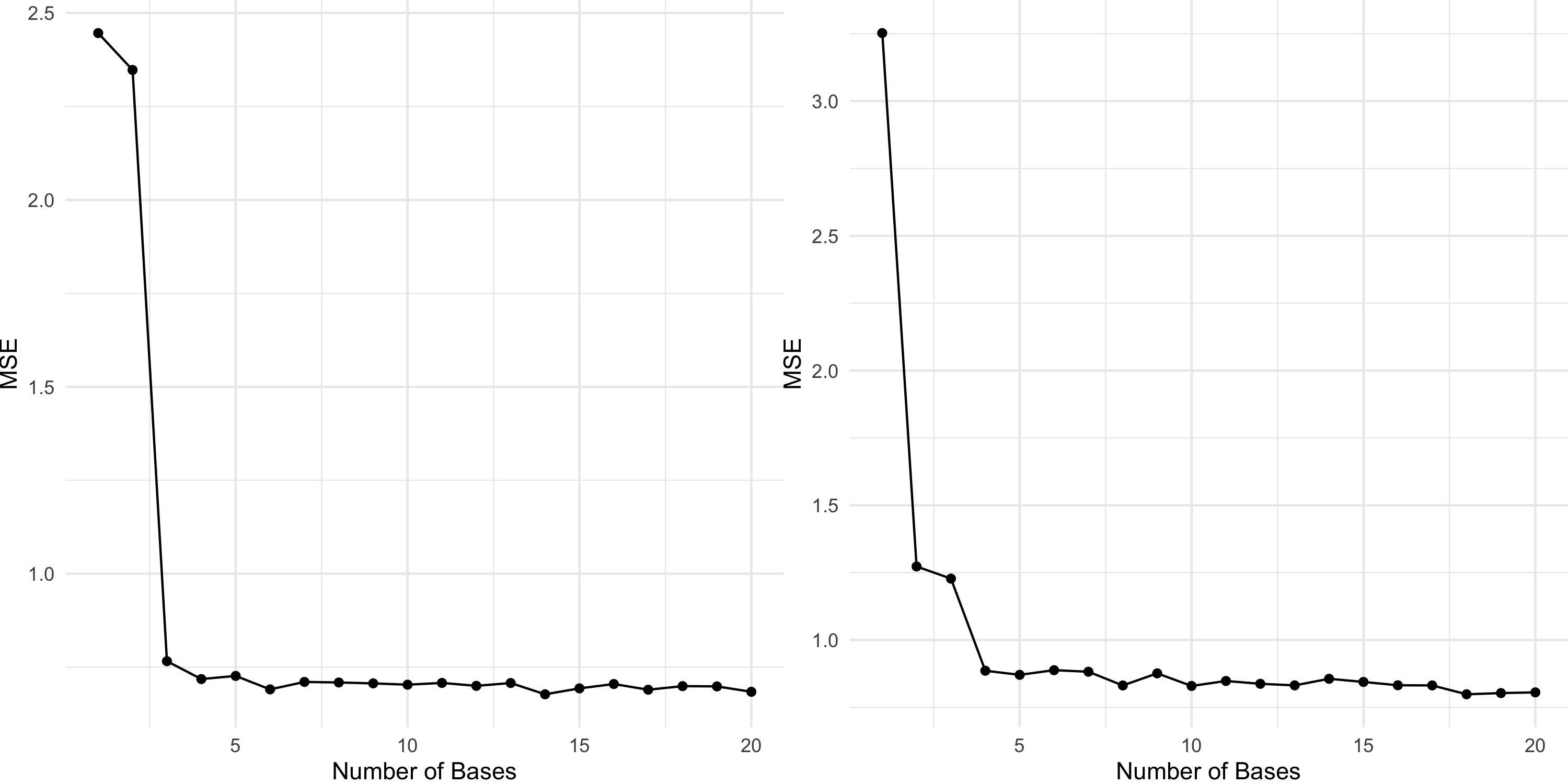}
  \end{center}
  \caption{ Relationship between the number of bases and the mean squared error (MSE) for Case 1 (left panel) and for Case 2 (right panel) in Section 4.1.
  \label{fig: number_bases}}
\end{figure}

Figure~\ref{fig: number_bases} illustrates the relationship between the number of basis functions and the mean squared error (MSE) on the test data for Cases~1~and~2 in Section~4.1. In this setting, the decrease in MSE nearly saturates when using around 3--4 basis functions, and further increases have little impact on the results. Therefore, employing a sufficiently large number of basis functions (approximately 10) ensures consistency.

Figure~\ref{fig: comparison_cov_weights} shows how CDST (via the EM algorithm) assigns weights to the linear regression (M1), additive model (M2), random forest (M3), and Gaussian process regression (M4) in Case 1 (weight dependent on internal covariates) in Section~\ref{subsec: comparison}. The left-hand side of the figure corresponds to Scenario 1 and the right-hand side to Scenario 2. In Scenario 1, we observe that each model is assigned constant (covariate-independent) weights, which is consistent with conventional stacking methods. As noted in the main text, there is no significant difference in accuracy between CDST and conventional stacking in this scenario. Conversely, Scenario 2 demonstrates that the weights of each model vary across the $(X_1,X_2)$ space. Notably, the additive model's weight increases in the regions where $X_1$ is small. This is a reasonable result, considering that the corresponding areas exhibit a nonlinear structure ($x_3^2$).

\begin{figure}[tb]
  \begin{center}
  \includegraphics[width=0.9\linewidth]{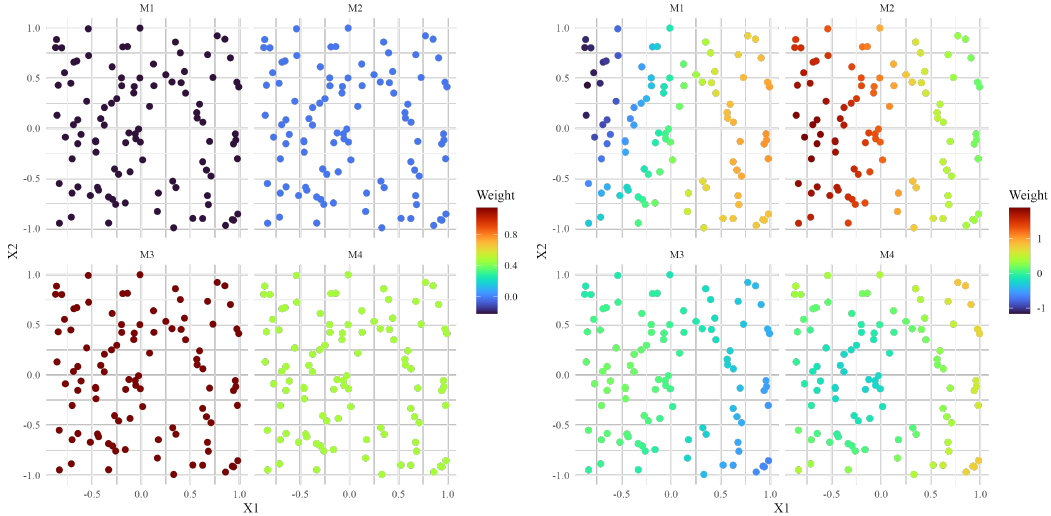}
  \end{center}
  \caption{Weights assigned by the proposed method for the four methods, linear regression (M1), additive model (M2), random forest (M3), and Gaussian process regression (M4), in the covariate-dependent data generating scenario. The left-hand set corresponds to Scenario 1 and the right-hand set to Scenario 2.\label{fig: comparison_cov_weights}}
\end{figure}

Figure~\ref{fig: comparison_sp_weights} depicts how CDST (via the EM algorithm) allocates weights to the additive model (M1), spatial random forest (M2), spatial simultaneous autoregressive lag model (M3), and geographically weighted regression (M4) in Case 2 (weight dependent on external covariates) from Section~\ref{subsec: comparison}. The figure is divided into four quadrants: top-left for Scenario 1, top-right for Scenario 2, bottom-left for Scenario 3, and bottom-right for Scenario 4. For Scenarios 1 and 2, each model is assigned constant (spatially independent) weights, similar to conventional stacking methods. Consequently, there is no significant difference in accuracy between CDST and conventional stacking in these scenarios. However, Scenarios 3 and 4 demonstrate that the weights for each model vary according to latitude and longitude. In both scenarios, the additive model (M1) receives increased weight in the western regions, which is consistent with the quadratic structure present in these areas. By contrast, the weight of the spatial random forest (M2) increases toward the eastern regions in Scenario 4, aligned with the increase in nonparametric effects (spatial random effects) in the eastern regions.

These results underscore the adaptive nature of the CDST in allocating model weights based on spatial and covariate-dependent factors, thereby capturing complex underlying structures in the data.

\begin{figure}[tb]
  \begin{center}
  \includegraphics[width=0.9\linewidth]{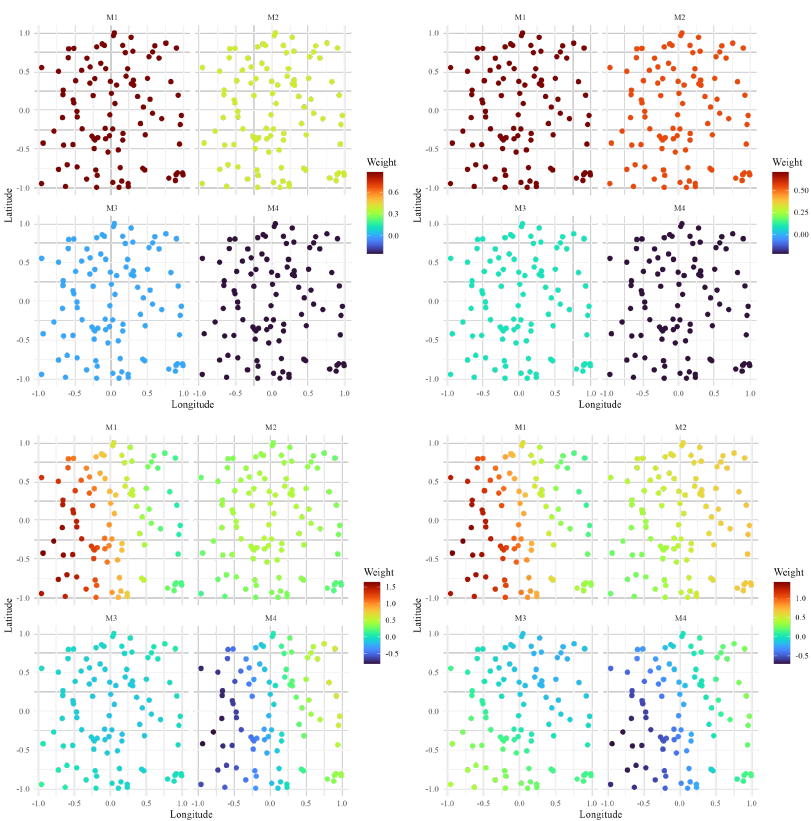}
  \end{center}
  \caption{Weights assigned by the proposed method for the four methods, additive model (M1), spatial random forest (M2), spatial simultaneous autoregressive lag model (M3), and geographically weighted regression (M4), in spatial settings. The top left-hand set corresponds to Scenario 1, the top right-hand set to Scenario 2, the bottom left-hand set to Scenario 3, and the bottom right-hand set to Scenario 4.\label{fig: comparison_sp_weights}}
\end{figure}

\bibliographystyle{chicago}
\bibliography{ref}

\end{document}